\newtheorem{theorem}{Theorem}
\newtheorem{example}{Example}
\newtheorem{lemma}{Lemma}
\newtheorem{lemmaRef}[lemma]{Lemma}
\newtheorem{remark}{Remark}
\newtheorem{definition}{Definition}
\newcommand{\smalllb}{\\[-0.25cm]}
\let\exampleOrig\endexample
\def\endexample{\hspace*{0pt}\hfill$\triangleleft$\smalllb\exampleOrig}
\let\lemmaOrig\endlemmaRef
\def\endlemmaRef{\hspace*{0pt}\hfill$\triangleleft$\smalllb\lemmaOrig}
\let\remarkOrig\endremark
\def\endremark{\hspace*{0pt}\hfill$\triangleleft$\smalllb\remarkOrig}
\let\definitionOrig\enddefinition
\def\enddefinition{\hspace*{0pt}\hfill$\triangleleft$\smalllb\definitionOrig}
\newcommand{\REFlem}[1]{\text{Lemma~\ref{#1}}}
\newcommand{\REFthm}[1]{\text{Thm.~\ref{#1}}}
\newcommand{\REFdef}[1]{Def.~\ref{#1}}
\newcommand{\REFsec}[1]{Sec.~\ref{#1}}
\newcommand{\deff}{:=}
\newcommand{\BR}[1]{\left( #1 \right)}
\newcommand{\ON}[1]{\operatorname{#1}}
\def\clap#1{\hbox to 0pt{\hss#1\hss}}
\newcommand{\val}[1]{\ensuremath{\mathsf{#1}}}
\newcommand{\DiCases}[4]{\ensuremath{\begin{cases}%
#1&,~#2\\%
#3&,~#4%
\end{cases}}}%
\newif\ifFIRST
\newif\ifSECOND
\let\LISTOP\relax
\newcommand{\List}[4][\;]{#3#1%
	\FIRSTtrue
	\@for\i:=#2\do{%
	\ifFIRST\LISTOP{\i}\FIRSTfalse\else,\LISTOP{\i}\fi%
	}%
	#1#4%
	\let\LISTOP\relax
}
\newcounter{DINGLIST}
\newcommand{\markD}[3][\;\;]{\text{\ding{\the\numexpr171+\theDINGLIST}\stepcounter{DINGLIST}}#1#3}
\newcommand{\ZZ}{\textsl{Zu Zeigen:}~\@ifstar\ZZStar\ZZNoStar}
\newcommand{\ZZStar}[1]{\begin{align*}#1\end{align*}}
\newcommand{\ZZNoStar}[1]{\ensuremath{#1}}
\newcommand{\THATIS}{i.e.\xspace}
\newcommand{\SUCHTHAT}{s.t.\xspace}
\newcommand{\IFF}{\unskip~\text{iff}~}
\newcommand{\SHOW}[2][]{Show \ifthenelse{\isempty{#1}}{}{#1, \THATIS, }\ensuremath{#2}:}
\newcommand{\SORRY}[1]{\emph{\color{red}Sorry: #1}\@latex@warning{SORRY: #1}}
\newcommand{\propNeg}{\@ifstar\propNegStar\propNegNoStar}
\newcommand{\propNegStar}[1]{\ensuremath{\left(\propNegNoStar{#1}\right)}}
\newcommand{\propNegNoStar}[2][\cdot]{\ensuremath{\neg\ifthenelse{\isempty{#2}}{#1}{#2}}}
\newcommand{\propConj}{\@ifstar\propConjStar\propConjNoStar}
\newcommand{\propConjStar}[2]{\ensuremath{\left(\propConjNoStar{#1}{#2}\right)}}
\newcommand{\propConjNoStar}[3][\cdot]{\ensuremath{\ifthenelse{\isempty{#2}}{#1}{#2}\wedge\ifthenelse{\isempty{#3}}{#1}{#3}}}
\newcommand{\propDisj}{\@ifstar\propDisjStar\propDisjNoStar}
\newcommand{\propDisjStar}[2]{\ensuremath{\left(\propDisjNoStar{#1}{#2}\right)}}
\newcommand{\propDisjNoStar}[3][\cdot]{\ensuremath{\ifthenelse{\isempty{#2}}{#1}{#2}\vee\ifthenelse{\isempty{#3}}{#1}{#3}}}
\newcommand{\propImp}{\@ifstar\propImpStar\propImpNoStar}
\newcommand{\propImpStar}[2]{\ensuremath{\left(\propImpNoStar{#1}{#2}\right)}}
\newcommand{\propImpNoStar}[3][\cdot]{\ensuremath{\ifthenelse{\isempty{#2}}{#1}{#2}\Rightarrow\ifthenelse{\isempty{#3}}{#1}{#3}}}
\newcommand{\propAequ}{\@ifstar\propAequStar\propAequNoStar}
\newcommand{\propAequStar}[2]{\ensuremath{\left(\propAequNoStar{#1}{#2}\right)}}
\newcommand{\propAequNoStar}[3][\cdot]{\ensuremath{\ifthenelse{\isempty{#2}}{#1}{#2}\Leftrightarrow\ifthenelse{\isempty{#3}}{#1}{#3}}}
\newcommand{\propXOR}{\@ifstar\propXORStar\propXORNoStar}
\newcommand{\propXORStar}[2]{\ensuremath{\left(\propXORNoStar{#1}{#2}\right)}}
\newcommand{\propXORNoStar}[3][\cdot]{\ensuremath{\ifthenelse{\isempty{#2}}{#1}{#2}\oplus\ifthenelse{\isempty{#3}}{#1}{#3}}}
\newcommand{\AllQ}{\@ifstar\AllQStar\AllQNoStar}
\newcommand{\AllQStar}[3][\;]{\ensuremath{\left(\forall #2#1.#1#3\right)}}
\newcommand{\AllQNoStar}[3][\;]{\ensuremath{\forall #2#1.#1#3}}
\newcommand{\AllQu}{\@ifstar\AllQuStar\AllQuNoStar}
\newcommand{\AllQuStar}[3][\;]{\ensuremath{\left(\forall^{\infty} #2#1.#1#3\right)}}
\newcommand{\AllQuNoStar}[3][\;]{\ensuremath{\forall^{\infty} #2#1.#1#3}}
\newcommand{\ExQ}{\@ifstar\ExQStar\ExQNoStar}
\newcommand{\ExQStar}[3][\;]{\ensuremath{\left(\exists #2#1.#1#3\right)}}
\newcommand{\ExQNoStar}[3][\;]{\ensuremath{\exists #2#1.#1#3}}
\newcommand{\NExQ}{\@ifstar\NExQStar\NExQNoStar}
\newcommand{\NExQStar}[3][\;]{\ensuremath{\left(\nexists #2#1.#1#3\right)}}
\newcommand{\NExQNoStar}[3][\;]{\ensuremath{\nexists #2#1.#1#3}}
\newcommand{\UniqueQ}{\@ifstar\UniqueQStar\UniqueQNoStar}
\newcommand{\UniqueQStar}[3][\;]{\ensuremath{\left(\exists! #2#1.#1#3\right)}}
\newcommand{\UniqueQNoStar}[3][\;]{\ensuremath{\exists! #2#1.#1#3}}
\newcommand{\Set}[2][]{\List[#1]{#2}{\{}{\}}}
\newcommand{\VSet}[2][]{\let\LISTOP\val\List[#1]{#2}{\{}{\}}}
\newcommand{\Tuple}[2][]{\List[#1]{#2}{(}{)}}
\newcommand{\VTuple}[2][]{\let\LISTOP\val\List[#1]{#2}{(}{)}}
\newcommand{\SetComp}[3][]{\{#1#2#1\mid#1#3#1\}}
\newcommand{\SetCompX}[3][]{\left\{#1#2#1\middle\vert#1#3#1\right\}}
\newcommand{\POWERSET}{\@ifstar\POWERSETStar\POWERSETNoStar}
\newcommand{\POWERSETStar}[1]{\ensuremath{\ON{2}^{\ifthenelse{\isempty{#1}}{\cdot}{#1}}}}
\newcommand{\POWERSETNoStar}[1]{\ensuremath{\ON{2}^{\ifthenelse{\isempty{#1}}{\cdot}{#1}}}}
\newcommand{\FINPOWERSET}{\@ifstar\FINPOWERSETStar\FINPOWERSETNoStar}
\newcommand{\FINPOWERSETStar}[1]{\ensuremath{\mathcal{P}_{\ON{fin}}(\ifthenelse{\isempty{#1}}{\cdot}{#1})}}
\newcommand{\FINPOWERSETNoStar}[1]{\ensuremath{\mathcal{P}_{\ON{fin}}\left(\ifthenelse{\isempty{#1}}{\cdot}{#1}\right)}}
\newcommand{\UNION}{\@ifstar\UNIONStar\UNIONNoStar}
\newcommand{\UNIONStar}[2]{\ensuremath{\left(\UNIONNoStar{#1}{#2}\right)}}
\newcommand{\UNIONNoStar}[2]{\ensuremath{\ifthenelse{\isempty{#1}}{\cdot}{#1}\cup\ifthenelse{\isempty{#2}}{\cdot}{#2}}}
\newcommand{\UNIOND}{\@ifstar\UNIONDStar\UNIONDNoStar}
\newcommand{\UNIONDStar}[2]{\ensuremath{\left(\UNIONDNoStar{#1}{#2}\right)}}
\newcommand{\UNIONDNoStar}[2]{\ensuremath{\ifthenelse{\isempty{#1}}{\cdot}{#1}\uplus\ifthenelse{\isempty{#2}}{\cdot}{#2}}}
\newcommand{\SETMINUS}{\@ifstar\SETMINUSStar\SETMINUSNoStar}
\newcommand{\SETMINUSStar}[2]{\ensuremath{\left(\SETMINUSNoStar{#1}{#2}\right)}}
\newcommand{\SETMINUSNoStar}[2]{\ensuremath{\ifthenelse{\isempty{#1}}{\cdot}{#1}\setminus\ifthenelse{\isempty{#2}}{\cdot}{#2}}}
\newcommand{\INTERSECT}{\@ifstar\INTERSECTStar\INTERSECTNoStar}
\newcommand{\INTERSECTStar}[2]{\ensuremath{\left(\INTERSECTNoStar{#1}{#2}\right)}}
\newcommand{\INTERSECTNoStar}[2]{\ensuremath{\ifthenelse{\isempty{#1}}{\cdot}{#1}\cap\ifthenelse{\isempty{#2}}{\cdot}{#2}}}
\newcommand{\CARTPROD}{\@ifstar\CARTPRODStar\CARTPRODNoStar}
\newcommand{\CARTPRODStar}[2]{\ensuremath{\left(\CARTPRODNoStar{#1}{#2}\right)}}
\newcommand{\CARTPRODNoStar}[2]{\ensuremath{\ifthenelse{\isempty{#1}}{\cdot}{#1}\times\ifthenelse{\isempty{#2}}{\cdot}{#2}}}
\newcommand{\FINCOUNT}{\@ifstar\FinCountStar\FinCountNoStar}
\newcommand{\FinCountStar}[1]{\ensuremath{\#(\ifthenelse{\isempty{#1}}{\cdot}{#1})}}
\newcommand{\FinCountNoStar}[1]{\ensuremath{\#\left(\ifthenelse{\isempty{#1}}{\cdot}{#1}\right)}}
\newcommand{\sconc}{\cdot}
\newcommand{\parfun}{\ensuremath{\ON{\rightharpoonup}}}
\newcommand{\fun}{\ensuremath{\ON{\rightarrow}}}
\tikzstyle{istate}=[state,initial,initial text=]
\tikzstyle{fistate}=[state,accepting,initial,initial text=]
\tikzstyle{fistateA}=[state,accepting,initial,initial text=,initial where=above]
\tikzstyle{fistateB}=[state,accepting,initial,initial text=,initial where=below]
\tikzstyle{fistateL}=[state,accepting,initial,initial text=,initial where=left]
\tikzstyle{fistateR}=[state,accepting,initial,initial text=,initial where=right]
\tikzstyle{ifstate}=[state,accepting,initial,initial text=]
\tikzstyle{ifstateA}=[state,accepting,initial,initial text=,initial where=above]
\tikzstyle{ifstateB}=[state,accepting,initial,initial text=,initial where=below]
\tikzstyle{ifstateL}=[state,accepting,initial,initial text=,initial where=left]
\tikzstyle{ifstateR}=[state,accepting,initial,initial text=,initial where=right]
\tikzstyle{istateA}=[state,initial,initial text=,initial where=above]
\tikzstyle{istateB}=[state,initial,initial text=,initial where=below]
\tikzstyle{istateL}=[state,initial,initial text=,initial where=left]
\tikzstyle{istateR}=[state,initial,initial text=,initial where=right]
\tikzstyle{fstate}=[state,accepting]
\tikzstyle{SFSautomat}=[->,>=stealth',shorten >=1pt,auto,node distance=2cm,on grid,semithick,inner sep=2pt,bend angle=45]
\newcommand{\SFSAutomatEdge}[5]{\draw[->, thick](#1) edge[#4] node[#5] {\ensuremath{#2}} (#3);}
\newenvironment{propConjA}{\left(\def\unionAtest{1}\begin{array}{@{\if\unionAtest1\gdef\unionAtest{0}\phantom{\wedge}\else\wedge\fi}l@{}}}{\end{array}\right)}
  \newlength{\SFS@HEIGHT}
  \newlength{\SFS@WIDTH}
  \newcommand{\SplitX}[2]{
	    \settoheight{\SFS@HEIGHT}{$#2$}
	    \settowidth{\SFS@WIDTH}{$#2$}
	    \mbox{\begin{tikzpicture}[baseline=(current bounding box.center)]
	    \node[] (E) at (0,0) {$#1$};
	    \node[inner sep=0pt] (F) at ($(E.south west)+(1ex,-1ex)+(3ex+.5\SFS@WIDTH,-\SFS@HEIGHT)$) {$#2$};
	    \node[] (E) at (0,0) {\phantom{$#1$}};
	    \draw[fill] ($(E.east)+(1ex,0ex)$) circle (.2ex);
	    \draw[-] ($(E.east)+(1ex,0ex)$) -- ($(E.south east)+(1ex,-0.5ex)$) -- ($(E.south west)+(1ex,-0.5ex)$) -- ($(E.south west)+(1ex,-1ex)-(0,\SFS@HEIGHT)$) -- ($(E.south west)+(2.5ex,-1ex)-(0,\SFS@HEIGHT)$);
	    \draw[fill] ($(E.south west)+(2.5ex,-1ex)-(0,\SFS@HEIGHT)$) circle (.2ex);
	    \end{tikzpicture}}}
  \newcommand{\SplitS}[2]{
	    \settoheight{\SFS@HEIGHT}{$#2$}
	    \settowidth{\SFS@WIDTH}{$#2$}
	    \mbox{\begin{tikzpicture}[baseline=(current bounding box.center)]
	    \node[] (E) at (0,0) {$#1$};
	    \node[inner sep=0pt] (F) at ($(E.south west)+(1ex,0.5ex)+(3ex+.5\SFS@WIDTH,-\SFS@HEIGHT)$) {$#2$};
	    \end{tikzpicture}}}	    
  \newcommand{\AllQSplit}[2]{\SplitX{\forall\;#1\;.}{#2}}
  \newcommand{\ExQSplit}[2]{\SplitX{\exists\;#1\;.}{#2}}
\providecommand{\length}[1]{\lvert#1\rvert}
\providecommand{\lengthw}[1]{\lvert#1\rvert_{\hspace{-0.2mm}_L}}
\newcommand{\trivialN}[1]{\text{trivial}\xspace}
\newcommand{\inps}{\ensuremath{\hspace{-0.8mm}\in\hspace{-0.8mm}}}
\newcommand{\gps}{\ensuremath{\hspace{-0.8mm}>\hspace{-0.8mm}}}
\newcommand{\timesps}{\ensuremath{\hspace{-0.8mm}\times\hspace{-0.8mm}}}
\newcommand{\equps}{\ensuremath{\hspace{-0.8mm}=\hspace{-0.8mm}}}
\newcommand{\Rb}{\ensuremath{\mathbb{R}}} 
\newcommand{\Nbn}{\ensuremath{\mathbb{N}_{0}}}
\newcommand{\Rbn}{\ensuremath{\mathbb{R}_{0}^+}}
\newcommand{\twoup}[1]{\ensuremath{2^{#1}}} 
\newcommand{\Psys}{\ensuremath{P}} 
\newcommand{\X}{\ensuremath{X}} 
\newcommand{\Xt}[2]{\ensuremath{\ifthenelse{\isempty{#2}}{X_{I}^{#1}}{X_{I,#2}^{#1}}}} 
\newcommand{\XT}[1]{\ensuremath{\ifthenelse{\isempty{#1}}{X_I}{X_{I,#1}}}} 
\newcommand{\Xk}[2]{\ensuremath{\ifthenelse{\isempty{#2}}{X_{E}^{#1}}{X_{E,#2}^{#1}}}} 
\newcommand{\XK}[1]{\ensuremath{\ifthenelse{\isempty{#1}}{X_{E}}{X_{E,#1}}}} 
\newcommand{\Zt}[2]{\ensuremath{\ifthenelse{\isempty{#2}}{Z_{I}^{#1}}{Z_{I,#2}^{#1}}}} 
\newcommand{\ZT}[1]{\ensuremath{\ifthenelse{\isempty{#1}}{Z_I}{Z_{I,#1}}}} 
\newcommand{\Zk}[2]{\ensuremath{\ifthenelse{\isempty{#2}}{Z_{E}^{#1}}{Z_{E,#2}^{#1}}}} 
\newcommand{\ZK}[1]{\ensuremath{\ifthenelse{\isempty{#1}}{Z_{E}}{Z_{E,#1}}}} 
\newcommand{\x}{\ensuremath{x}} 
\newcommand{\xt}{\ensuremath{\tilde{x}}} 
\newcommand{\kt}{\ensuremath{\tilde{k}}} 
\newcommand{\Xo}[1]{\ensuremath{X_{#1 0}}}
\newcommand{\Z}{\ensuremath{Z}}
\newcommand{\ZPit}[1]{\ensuremath{\ifthenelse{\isempty{#1}}{Z_t}{Z_{#1,t}}}} 
\newcommand{\ZPiT}[1]{\ensuremath{\ifthenelse{\isempty{#1}}{Z_T}{Z_{#1,T}}}} 
\newcommand{\ZTit}[1]{\ensuremath{\ifthenelse{\isempty{#1}}{\breve{Z}_t}{\breve{Z}_{#1,t}}}} 
\newcommand{\ZTiT}[1]{\ensuremath{\ifthenelse{\isempty{#1}}{\breve{Z}_T}{\breve{Z}_{#1,T}}}} 
\newcommand{\z}{\ensuremath{z}}
\newcommand{\w}{\ensuremath{w}} 
\newcommand{\wt}{\ensuremath{\tilde{w}}} 
\newcommand{\gt}{\ensuremath{\tilde{\gamma}}}
\newcommand{\tet}{\ensuremath{\tilde{t}}}
\newcommand{\taut}{\ensuremath{\tilde{\tau}}}
\newcommand{\tr}{\ensuremath{\delta}}
\newcommand{\T}{\ensuremath{T}} 
\newcommand{\I}{\ensuremath{\mathbf{i}}} 
\newcommand{\Interval}{\ensuremath{\mathcal{I}}}
\renewcommand{\ll}[1]{\ensuremath{|_{[#1]}}} 
\newcommand{\lb}[1]{\ensuremath{|_{\langle#1\rangle}}} 
\newcommand{\llr}[1]{\ensuremath{|_{[#1)}}}
\newcommand{\f}[1]{\ensuremath{f\ifthenelse{\isempty{#1}}{}{\Tuple{#1}}}} 
\newcommand{\g}[1]{\ensuremath{g\ifthenelse{\isempty{#1}}{}{\Tuple{#1}}}} 
\newcommand{\h}[1]{\ensuremath{h\ifthenelse{\isempty{#1}}{}{\Tuple{#1}}}} 
\newcommand{\fs}[2]{\ensuremath{f_{#2}\ifthenelse{\isempty{#1}}{}{\Tuple{#1}}}} 
\newcommand{\gs}[2]{\ensuremath{g_{#2}\ifthenelse{\isempty{#1}}{}{\Tuple{#1}}}}  
\newcommand{\hs}[2]{\ensuremath{h_{#2}\ifthenelse{\isempty{#1}}{}{\Tuple{#1}}}}  
\newcommand{\lnc}{\ensuremath{l_t}}
\newcommand{\Beh}{\ensuremath{\mathcal{B}}}
\newcommand{\kg}[1]{\ensuremath{\xspace\preceq_{#1}\xspace}}
\newcommand{\hg}[1]{\ensuremath{\xspace\cong_{#1}\xspace}}
\newcommand{\async}{\ensuremath{\wr_{|}}}
\newcommand{\sync}{\ensuremath{\shortparallel}}
\newcommand{\wsync}{\ensuremath{\wr\shortmid}}
\newcommand{\SR}[3]{\ensuremath{\mathfrak{R}_{#1}(#2,#3)}}
\newcommand{\ESn}[1]{\ensuremath{\ifthenelse{\isempty{#1}}{\Sigma^+_{S}}{\Sigma^+_{S,#1}}}}
\newcommand{\BehS}[1]{\ensuremath{\ifthenelse{\isempty{#1}}{\Beh_{S}}{\Beh_{S,#1}}}}
\newcommand{\BehE}[1]{\ensuremath{\ifthenelse{\isempty{#1}}{\Beh_{E}}{\Beh_{E,#1}}}}
\newcommand{\WT}{\ensuremath{W}}
\newcommand{\W}{\ensuremath{W}}
\newcommand{\D}{\ensuremath{D}}
\newcommand{\V}{\ensuremath{V}}
\newcommand{\statemap}[3]{\ifthenelse{\isempty{#2#3}}{\psi_{#1}}{\psi_{#1}(#2,#3)}}
\newcommand{\statemapPi}[3]{\ifthenelse{\isempty{#2#3}}{\varphi_{#1}}{\varphi_{#1}(#2,#3)}}
\newcommand{\statemapTi}[3]{\ifthenelse{\isempty{#2#3}}{\psi_{#1}}{\psi_{#1}(#2,#3)}}
\newcommand{\CONCAT}[4]{#1\wedge^{#2}_{#3}#4}
\newcommand{\Xx}[2]{\ensuremath{\ifthenelse{\isempty{#1}}{\mathcal{X}_{E}}{\mathcal{X}_{E,#1}}\ifthenelse{\isempty{#2}}{}{(#2)}}}
\newcommand{\Xxp}[2]{\ensuremath{\overline{\chi}_{#1}\ifthenelse{\isempty{#2}}{}{(#2)}}}
\newcommand{\Xxr}[3]{\ensuremath{\ifthenelse{\isempty{#1}}{\mathcal{X}_{E}^{#2}}{\mathcal{X}_{E,#1}^{#2}}\ifthenelse{\isempty{#3}}{}{(#3)}}}
\newcommand{\Xxrp}[3]{\ensuremath{\overline{\chi}_{#1}^{#2}\ifthenelse{\isempty{#3}}{}{(#3)}}}
\newcommand{\R}{\ensuremath{\mathcal{R}}}
\newcommand{\timescale}{\mathcal{T}}
\newcommand{\timescaleUp}[1]{{#1}}
\newcommand{\timescaleDown}[1]{{#1}^{-1}}
\newcommand{\signalmap}{\phi}
\newcommand{\signalmapR}[1]{\ensuremath{\ifthenelse{\isempty{#1}}{\signalmap_{r}}{\signalmap_{r,#1}}}}
\newcommand{\signalmapP}[1]{\ensuremath{\ifthenelse{\isempty{#1}}{\signalmap_{p}}{\signalmap_{p,#1}}}}
\newcommand{\signalmapT}[1]{\ensuremath{\ifthenelse{\isempty{#1}}{\signalmap_{t}}{\signalmap_{t,#1}}}}
\newcommand{\philaMax}{\signalmap_l}
\newcommand{\E}{\ensuremath{\Sigma}}
\newcommand{\Ep}[1]{\ensuremath{\Sigma_{#1}^{\signalmap}}}
\newcommand{\EpRhsDisc}{\ensuremath{\Tuple{\T,\Nbn,\WT,\Gamma,\Beh,\BehE{},\signalmap}}}
\newcommand{\ES}[1]{\ensuremath{\ifthenelse{\isempty{#1}}{\Sigma_{S}}{\Sigma_{S,#1}}}}
\newcommand{\EpS}[1]{\ensuremath{\ifthenelse{\isempty{#1}}{\Ep{S}}{\Ep{S,#1}}}}
\newcommand{\EpSR}[1]{\ensuremath{\ifthenelse{\isempty{#1}}{\Sigma^{\signalmap_{r}}_{S}}{\Sigma^{\signalmap_{r,#1}}_{S,#1}}}}
\newcommand{\EpSP}[1]{\ensuremath{\ifthenelse{\isempty{#1}}{\Sigma^{\signalmap_{p}}_{S}}{\Sigma^{\signalmap_{p,#1}}_{S,#1}}}}
\newcommand{\EpST}[1]{\ensuremath{\ifthenelse{\isempty{#1}}{\Sigma^{\signalmap_{t}}_{S}}{\Sigma^{\signalmap_{t,#1}}_{S,#1}}}}
\newcommand{\EpSRhsDisc}[1]{\ensuremath{(\T_{#1},\allowbreak\Nbn,\allowbreak\WT_{#1}\nobreak\times\nobreak\X_{#1},\allowbreak\Gamma,\allowbreak\BehS{#1},\allowbreak\BehE{#1},\allowbreak\signalmap_{#1})}}
\newcommand{\EplaMaxSRhs}[1]{\ensuremath{(\Nbn,\allowbreak\Nbn,\allowbreak\Gamma_{#1}\nobreak\times\nobreak\Z_{#1},\allowbreak\Gamma,\allowbreak\BehlaMaxS{#1},\allowbreak\BehElaMax{#1},\allowbreak\philaMax)}}
\newcommand{\EE}[1]{\ensuremath{\ifthenelse{\isempty{#1}}{\Sigma_{E}}{\Sigma_{E,#1}}}}
\newcommand{\ESm}[1]{\ensuremath{\ifthenelse{\isempty{#1}}{\Sigma_{\psi}}{\Sigma_{\psi,#1}}}}
\newcommand{\EElaMax}{\ensuremath{\Sigma_{E}^{l^\uparrow}}}
\newcommand{\EplMaxS}[1]{\ensuremath{\ifthenelse{\isempty{#1}}{\Sigma^{\signalmap,l^\Uparrow}_{S}}{\Sigma^{\signalmap,l^\Uparrow}_{S,#1}}}}
\newcommand{\EplaMaxS}[1]{\ensuremath{\ifthenelse{\isempty{#1}}{\Sigma^{\signalmap,l^\uparrow}_{S}}{\Sigma^{\signalmap,l^\uparrow}_{S,#1}}}}
\newcommand{\EplMaxSp}[1]{\ensuremath{\ifthenelse{\isempty{#1}}{\overline{\Sigma}^{\signalmap,l^\uparrow}_{S}}{\overline{\Sigma}^{\signalmap,l^\uparrow}_{S,#1}}}}
\newcommand{\EplncMaxS}[1]{\ensuremath{\ifthenelse{\isempty{#1}}{\Sigma^{\signalmap,\lnc^\uparrow}_{S}}{\Sigma^{\signalmap,\lnc^\uparrow}_{S,#1}}}}
\newcommand{\EplsMaxS}[2]{\ensuremath{\ifthenelse{\isempty{#1}}{\Sigma^{\signalmap,{#2}^\uparrow}_{S}}{\Sigma^{\signalmap,{#2}^\uparrow}_{S,#1}}}}
\newcommand{\ElMaxS}[1]{\ensuremath{\ifthenelse{\isempty{#1}}{\Sigma^{l^\Uparrow}_{S}}{\Sigma^{l^\Uparrow}_{S,#1}}}}
\newcommand{\ElaMaxS}[1]{\ensuremath{\ifthenelse{\isempty{#1}}{\Sigma^{l^\uparrow}_{S}}{\Sigma^{l^\uparrow}_{S,#1}}}}
\newcommand{\ElMaxSp}[1]{\ensuremath{\ifthenelse{\isempty{#1}}{\overline{\Sigma}^{l^\uparrow}_{S}}{\overline{\Sigma}^{l^\uparrow}_{S,#1}}}}
\newcommand{\ElncMaxS}[1]{\ensuremath{\ifthenelse{\isempty{#1}}{\Sigma^{\lnc^\uparrow}_{S}}{\Sigma^{\lnc^\uparrow}_{S,#1}}}}
\newcommand{\ElsMaxS}[2]{\ensuremath{\ifthenelse{\isempty{#1}}{\Sigma^{{#2}^\uparrow}_{S}}{\Sigma^{{#2}^\uparrow}_{S,#1}}}}
\newcommand{\ElE}[1]{\ensuremath{\ifthenelse{\isempty{#1}}{\Sigma^{l}_{E}}{\Sigma^{l}_{E,#1}}}}
\newcommand{\ElEp}[1]{\ensuremath{\ifthenelse{\isempty{#1}}{\Sigma^{l}_{E}}{\overline{\Sigma}^{l}_{E,#1}}}}
\newcommand{\ElncE}[1]{\ensuremath{\ifthenelse{\isempty{#1}}{\Sigma^{\lnc}_{E}}{\Sigma^{\lnc}_{E,#1}}}}
\newcommand{\ElMaxE}[1]{\ensuremath{\ifthenelse{\isempty{#1}}{\Sigma^{l^\uparrow}_{E}}{\Sigma^{l^\uparrow}_{E,#1}}}}
\newcommand{\ElMaxEp}[1]{\ensuremath{\ifthenelse{\isempty{#1}}{\overline{\Sigma}^{l^\uparrow}_{E}}{\overline{\Sigma}^{l^\uparrow}_{E,#1}}}}
\newcommand{\ElncMaxE}[1]{\ensuremath{\ifthenelse{\isempty{#1}}{\Sigma^{\lnc^\uparrow}_{E}}{\Sigma^{\lnc^\uparrow}_{E,#1}}}}
\newcommand{\ElsMaxE}[2]{\ensuremath{\ifthenelse{\isempty{#1}}{\Sigma^{{#2}^\uparrow}_{E}}{\Sigma^{{#2}^\uparrow}_{E,#1}}}}
\newcommand{\BehElaMax}{\ensuremath{\Beh_{E}^{l^\uparrow}}}
\newcommand{\BehlMaxS}[1]{\ensuremath{\ifthenelse{\isempty{#1}}{\Beh^{l^\Uparrow}_{S}}{\Beh^{l^\Uparrow}_{S,#1}}}}
\newcommand{\BehlaMaxS}[1]{\ensuremath{\ifthenelse{\isempty{#1}}{\Beh^{l^\uparrow}_{S}}{\Beh^{l^\uparrow}_{S,#1}}}}
\newcommand{\BehlMaxSp}[1]{\ensuremath{\ifthenelse{\isempty{#1}}{\overline{\Beh}^{l^\uparrow}_{S}}{\overline{\Beh}^{l^\uparrow}_{S,#1}}}}
\newcommand{\BehlncMaxS}[1]{\ensuremath{\ifthenelse{\isempty{#1}}{\Beh^{\lnc^\uparrow}_{S}}{\Beh^{\lnc^\uparrow}_{S,#1}}}}
\newcommand{\BehlsMaxS}[2]{\ensuremath{\ifthenelse{\isempty{#1}}{\Beh^{{#2}^\uparrow}_{S}}{\Beh^{{#2}^\uparrow}_{S,#1}}}}
\newcommand{\BehlE}[1]{\ensuremath{\ifthenelse{\isempty{#1}}{\Beh^{l}_{E}}{\Beh^{l}_{E,#1}}}}
\newcommand{\BehlncE}[1]{\ensuremath{\ifthenelse{\isempty{#1}}{\Beh^{\lnc}_{E}}{\Beh^{\lnc}_{E,#1}}}}
\newcommand{\BehlMaxE}[1]{\ensuremath{\ifthenelse{\isempty{#1}}{\Beh^{l^\uparrow}_{E}}{\Beh^{l^\uparrow}_{E,#1}}}}
\newcommand{\BehlMaxEp}[1]{\ensuremath{\ifthenelse{\isempty{#1}}{\overline{\Beh}^{l^\uparrow}_{E}}{\overline{\Beh}^{l^\uparrow}_{E,#1}}}}
\newcommand{\BehlncMaxE}[1]{\ensuremath{\ifthenelse{\isempty{#1}}{\Beh^{\lnc^\uparrow}_{E}}{\Beh^{\lnc^\uparrow}_{E,#1}}}}
\newcommand{\BehlsMaxE}[2]{\ensuremath{\ifthenelse{\isempty{#1}}{\Beh^{{#2}^\uparrow}_{E}}{\Beh^{{#2}^\uparrow}_{E,#1}}}}
\newcommand{\BehVlMaxS}[1]{\ensuremath{\ifthenelse{\isempty{#1}}{\projState{\V}{\Beh^{l^\uparrow}_{S}}}{\projState{\V}{\Beh^{l^\uparrow}_{S,#1}}}}}
\newcommand{\BehDlMaxS}[1]{\ensuremath{\ifthenelse{\isempty{#1}}{\projState{\D}{\Beh^{l^\uparrow}_{S}}}{\projState{\D}{\Beh^{l^\uparrow}_{S,#1}}}}}
\newcommand{\EoMaxS}[1]{\ensuremath{\ifthenelse{\isempty{#1}}{\Sigma^{1^\uparrow}_{S}}{\Sigma^{1^\uparrow}_{S,#1}}}}
\newcommand{\projState}[2]{\pi_{#1}(#2)}
\newcommand{\dom}[1]{\ensuremath{\mathrm{dom(#1)}}}
\title{\LARGE \bf
Constructing (Bi)Similar Finite State Abstractions using Asynchronous $l$-Complete Approximations
}
\author{Anne-Kathrin Schmuck and Jörg Raisch
\thanks{A.-K. Schmuck and J. Raisch are with the Control Systems Group, Technical University of Berlin, Germany. J. Raisch is also with the Max Planck Institute
for Dynamics of Complex Technical Systems, Magdeburg, Germany. {\tt\small \{a.schmuck,raisch\}@control.tu-berlin.de}}
}
\begin{document}

\maketitle
\thispagestyle{empty}
\pagestyle{empty}

 \begin{abstract}
This paper constructs a finite state abstraction of a possibly continuous-time and infinite state model in two steps. First, a finite external signal space is added, generating a so called $\signalmap$-dynamical system. Secondly, the strongest asynchronous $l$-complete approximation of the external dynamics is constructed. As our main results, we show that 
\begin{inparaenum}[(i)]
 \item the abstraction simulates the original system, and 
 \item bisimilarity between the original system and its abstraction holds, if and only if the original system is $l$-complete and its state space satisfies an additional property.
\end{inparaenum}
\end{abstract}

\section{Introduction}
Real life control problems for large scale systems are often very challenging due to numerous interactions between different components and usually tight performance requirements. One way to reduce the complexity of such problems is to introduce different control layers using well defined abstractions. Usually, the top control layer will enforce high level specifications, such as interconnection or safety requirements, typically expressible by regular languages. 
With this specification type, supervisory control theory (SCT) \cite{RamWon1984} can be used to synthesize a correct by design controller if the abstracted model can be represented by a regular language, i.e., if it can be realized by a finite state machine.\\
Motivated by this, Tabuada and Pappas \cite{TabuadaPappas2003b,TabuadaPappas2003,TabuadaBook} developed finite state abstraction methods 
generating a regular language representation of the plant model.
With the same motivation but independently from their work, the notion of a strongest $l$-complete approximation was introduced by Moor and Raisch \cite{MoorRaisch1999, MoorRaischYoung2002}  as a discrete abstraction technique for time invariant behavioral systems. The applicability of this approximation method was recently enlarged by Schmuck and Raisch \cite{SchmuckRaisch2014_ControlLetters}, introducing so called asynchronous $l$-complete approximations.\\
While the existence of simulation or bisimulation relations between $l$-complete approximations and the original system has not yet been formally investigated, the abstraction techniques by Tabuada and Pappas ensure the existence of such relations between the original and the abstracted plant model. However, in their work, 
the original system is rewritten into a transition system, previous to the abstraction step. The simulation or bisimulation relation is then ensured to hold between the transition system (not the original model) and its finite state abstraction.
Their rewriting step is necessary since simulation relations are naturally defined between models evolving on the same time axis. To overcome this limitation, Schmuck and Raisch \cite{SchmuckRaisch2014_HSCC} introduced $\signalmap$-dynamical systems, a system model with distinct external and internal signals possibly evolving on different time axes. In \cite{SchmuckRaisch2014_HSCC}, different notions of simulation and bisimulation where derived, ensuring that they are, respectively, preorders and equivalence relations for this system class.\\
$\signalmap$-dynamical systems are able to model abstraction processes or signal aggregation by combining both the original (possibly continuous-time) state dynamics and the corresponding external discrete-time behavior. This can naturally be combined with asynchronous $l$-complete approximations of the external behavior, generating a finite state abstraction if the external signal space is finite. Therefore, in contrast to the work by Tabuada and Pappas, no intermediate transition system has to be introduced to reason about similarity.\\
After introducing required notation in Section~\ref{sec:prelim}, we review the notion of $\signalmap$-dynamical systems in Section~\ref{sec:sys} and apply the construction of a strongest asynchronous $l$-complete approximation to this system class in Section~\ref{sec:lcomp}. In Section~\ref{sec:SimRel} we briefly review the simulation relations defined in \cite{SchmuckRaisch2014_HSCC} for $\signalmap$-dynamical systems. As our main result, we prove the existence of different simulation relations between the original system and its approximation and derive necessary and sufficient conditions for bisimilarity in Section~\ref{sec:SimRelApprox}.\\
Our construction extends the work by Tabuada and Pappas in three ways:
\begin{inparaenum}[(i)]
 \item simulation relations are established between the original state space dynamics and the abstraction,
 \item the accuracy of the abstracted system can be adjusted during construction without refining the external signal space and
 \item the behavioral framework (e.g., \cite{Willems1991}) is used%
 , allowing for infinite trajectories with eventuality properties.
\end{inparaenum}

\section{Preliminaries}\label{sec:prelim}

In the behavioral framework (e.g., \cite{Willems1991}), a \textit{dynamical system} is given by ${\E=\Tuple{\T,\WT,\Beh}}$, consisting of the right-unbounded time axis ${\T\subseteq\Rb}$, the signal space $\WT$ and the behavior of the system, ${\Beh\subseteq\WT^\T}$, where
$\WT^\T\deff\SetComp{w}{w:\T\fun\WT}$ is the set of all 
\textit{signals} evolving on $\T$ and
taking values in $\WT$. 
Slightly abusing notation, we also write  $v\in\WT^\T$ if $v:\T\parfun\WT$ is a \textit{partial function}. This is understood to be shorthand for $v\in\WT^{\dom{v}}$, where $\dom{v}=\SetComp{t\in\T}{v(t)\text{ is defined}}$ is the \textit{domain} of $v$. 
Furthermore, $\I:\T\fun \T$ is the \textit{identity map} s.t.
\footnote{Throughout this paper we use the notation "$\AllQ{}{}$", meaning that all statements after the dot hold for all variables in front of the dot. "$\ExQ{}{}$" is interpreted analogously. } 
$\AllQ{t\in\T}{\I(t)=t}$.\\
Let $\Interval$ be a bounded interval on $\T$, then $\WT^\Interval\deff\SetComp{w}{w:\Interval\fun\WT}$ is the set of \textit{signals on $\Interval$} taking values in $\WT$. 
 Furthermore, $\w|_{\Interval}$ is the \textit{restriction} of the map $w:\T\fun\WT$ to the domain $\Interval$.
 $\Beh|_{\Interval}\subseteq\WT^\Interval$ denotes the restriction of all trajectories in $\Beh$ to $\Interval$. 
Now let $\WT=\WT_1\times\WT_2$ be a product space. Then the \textit{projection} of a signal $w\in\WT^\T$ to $\WT_1$ is given by ${\projState{\WT_1}{w}\deff\SetComp{w_1\in\WT_1^\T}{\ExQ{w_2\in\WT_2^\T}{w=\Tuple{w_1,w_2}}}}$ and $\projState{\WT_1}{\Beh}$ denotes the projection of all signals in the behavior to $\WT_1$.  
Given two signals $w_1,w_2\in\WT^\T$ and two time points $t_1,t_2\in\T$, the \textit{concatenation}  $w_3=\CONCAT{w_1}{t_1}{t_2}{w_2}$ is given by
\begin{equation}\label{equ:concat}
\AllQ{t\in\T}{w_3(t)=\DiCases{w_1(t)}{t< t_1}{w_2(t-t_1+t_2)}{t\geq t_1},}
\end{equation}
where we denote $\CONCAT{\cdot}{t}{t}{\cdot}$ by $\CONCAT{\cdot}{}{t}{\cdot}$.\\
Following \cite{Willems1991}, a system ${\E=\Tuple{\T,\WT,\Beh}}$ is \textit{complete} if
\begin{equation}\label{equ:complete_old}
\propAequ{\AllQ*{t_1,t_2\in\T,t_1\leq t_2}{w\ll{t_1,t_2}\in \Beh\ll{t_1,t_2}}}{w\in\Beh},
\end{equation}
and, following \cite[Def.3]{SchmuckRaisch2014_ControlLetters}, we say that ${\E=\Tuple{\T,\WT,\Beh}}$ is \textit{asynchronously $l$-complete} if
\begin{equation}\label{equ:lcomplete}
\propAequ{
\begin{propConjA}
w\ll{0,l}\in\Beh\ll{0,l}\\
 \AllQ{t\in\T}{w\ll{t,t+l}\in \bigcup_{t'\in\T}\Beh\ll{t',t'+l}}
\end{propConjA}}{w\in\Beh}.
\end{equation}
Now let $X$ be a set. Then, following \cite[Def.1]{SchmuckRaisch2014_ControlLetters}, the system 
${\ES{}=\Tuple{\T,\WT\times\X,\BehS{}}}$ is an \textit{asynchronous state space dynamical system} if
\begin{equation}\label{equ:StateSpaceDynamicalSystem:2}
 \AllQ{\Tuple{w_1,x_1},\Tuple{w_2,x_2}\in\BehS{}, t_1,t_2\in\T}{\propImp*{x_1(t_1)=x_2(t_2)}{\CONCAT{\Tuple{w_1,x_1}}{t_1}{t_2}{\Tuple{w_2,x_2}}\in\BehS{}},}
\end{equation}
and we say that $\ES{}=\Tuple{\T,\WT\times\X,\BehS{}}$ is an asynchronous state space system for ${\E=\Tuple{\T,\WT,\Beh}}$ if ${\projState{W}{\BehS{}}=\Beh}$.\\
A \textit{state machine} is a tuple ${\Psys=(\X,\W,\tr,\Xo{})}$, where $\X
$ is the 
state space, $\W$ 
is the 
signal space, $\Xo{}\subseteq\X$ is the set of initial states and $\tr\subseteq\X\timesps\W\timesps\X$ is a next state relation.
Then 
\begin{equation*}
 \Beh_f(\Psys):=\SetCompX{\Tuple{\w,\x}}{
\begin{propConjA}
x(0)\in\Xo{}\\
 \AllQ{t\in\T}{\Tuple{\x(t),\w(t),\x(t+1)}\in\tr}
\end{propConjA}}
\end{equation*}
is the \textit{full behavior induced by $\Psys$}, and we say that $\Psys=\allowbreak\Tuple{\X,\W,\tr,\Xo{}}$ realizes ${\ES{}=\Tuple{\Nbn,\WT\times\X,\BehS{}}}$ if $\Beh_f(\Psys)=\BehS{}$. Furthermore, $\Psys$ is a \textit{finite} state machine if $\length{X}<\infty$ and $\length{W}<\infty$.\\
Now let ${\T=\Nbn}$. Then, given time instants $t_1,t_2\in\Nbn$, $t_1\leq t_2$, the string $w\in\W^{[t_1,t_2]}$ is of length $\lengthw{\w}=t_2-t_1+1$ and for $t_1< t_2$ we define
$\w\ll{t_2,t_1}:=\lambda$, where $\lambda$ denotes the \textit{empty string} with $\lengthw{\lambda}=0$. Furthermore, the concatenation of the restrictions ${w_1'=w_1\ll{0,t_1}}$ and  ${w_2'=w_2\ll{0,t_2}}$, with $t_1,t_2\in\Nbn,~t_1\leq t_2$, is defined as the standard concatenation of finite strings, i.e., 
$\w_1'\sconc\w_2':=\BR{\CONCAT{w_1}{t_1+1}{0}{w_2}}\hspace{-1mm}\ll{0,t_1+t_2+1}$. Furthermore, for a finite string $w=\nu_0\nu_1\hdots\nu_l$ we denote the restriction of $w$ by $w\lb{i,j}:=\nu_i\hdots\nu_j$ with ${0\leq i\leq j\leq l}$.

\section{$\signalmap$~-~Dynamical Systems}\label{sec:sys}
The common starting point of methods generating finite state abstractions of a possibly continuous-time and infinite state system is the definition of a finite external signal space $\Gamma$. This external signal space can be understood as the information content which needs to be preserved or approximated when interconnecting  the system to other components or when controlling it w.r.t. a given specification and is therefore application-specific.
While the evolution of the introduced external variable on $\Gamma$ is in discrete-time $\Nbn$, the internal dynamics will still evolve on the original, possibly continuous time axis $\T$.
To handle such models with distinct internal and external time axes, we use the notion of $\signalmap$-dynamical systems.\smalllb

\begin{definition}[\cite{SchmuckRaisch2014_HSCC}, Def.1] \label{def:DynSysInducesFiniteV}
Let ${\E=\Tuple{\T,\WT,\Beh}}$ be a dynamical system.
Then $\Ep{}=\EpRhsDisc$ is a \textbf{$\signalmap$-dynamical system} if
\begin{equation*}
 \signalmap:\Beh\fun\twoup{\Gamma^{\Nbn}\times\timescale}
\end{equation*}
where $\Gamma$ is an external signal space,
\[\timescale=\SetCompX{\timescaleUp{\tau}:\T\parfun\Nbn}{
 \timescaleUp{\tau} \text{ is surjective and}\text{monotonically increasing}}\] 
 is a set of time scale transformations and 
 \begin{equation}\label{equ:BehE}
  \BehE{}=\SetCompX{\gamma\in\Gamma^{\Nbn}}{
 \ExQ{w\in\Beh,\tau\in\timescale}{\Tuple{\gamma,\tau}\in\signalmap(\w)}}
 \end{equation}
 is the external behavior. Furthermore, ${\timescaleDown{\tau}:\Nbn\fun\twoup{\T}}$ denotes the inverse time scale transformation\footnote{If $\AllQ{k\in\Nbn}{\length{\timescaleDown{\tau}(k)}=1}$, by slightly abusing notation, we denote the unique element $t_k\in\timescaleDown{\tau}(k)$ by $\timescaleDown{\tau}(k)$ itself and write $t_k=\timescaleDown{\tau}(k)$.}, i.e., $\timescaleDown{\tau}(k)=\SetComp{t\in\T}{\timescaleUp{\tau}(t)=k}$.
\end{definition}

In a $\signalmap$-dynamical system the map $\signalmap$ describes how internal signals are discretized (in space and time) to generate the external behavior $\BehE{}$.
The concept covers both time-triggered and event-triggered discretization.
The following example illustrates how event-triggered discretization can be captured in a $\signalmap$-dynamical system.\smalllb

\begin{example}\label{exp:1}\normalfont
Consider a dynamical system ${\E=\Tuple{\T,\WT,\Beh}}$ with ${\T=\Rbn}$, ${\WT=\INTERSECT{\Rb}{[-10,10]}}$, and ${w\in\Beh}$ iff $w$ is continuous and $w(0)\in\Set{-10,10}$. 
 Using $\Gamma=\Set{m_2,m_1,p_1,p_2}$ and the sets
 \begin{align*}
  I_{m_2}&=[-10,-4),&I_{m_1}&=(-6,1),\\
  I_{p_1}&=(-1,6),&I_{p_2}&=(4,10],
 \end{align*}
 the external signals are constructed 
 using a set-valued discretization map $\mathfrak{d}:W\fun\twoup{\Gamma}$ \SUCHTHAT 
 \[\AllQ{G\in\Gamma,\nu\in\W}{\propAequ{G\in \mathfrak{d}(\nu)}{\nu\in I_{G}}}.\] %
Out of the many different options to construct $\signalmap$ from $\mathfrak{d}$, we discuss the two maps $\signalmap_a$ and $\signalmap_b$ as depicted in Fig.~\ref{fig:timescale1}~-~\ref{fig:timescale2}.\\
The signal map  $\signalmap_a$ is constructed s.t.
for all $ \gamma\in\Gamma^{\Nbn},\tau_a\in\timescale$ and $\w\in\Beh$, it holds that $\Tuple{\gamma,\tau_a}\in\signalmap_a(\w)$ \IFF
\begin{equation*}
 \gamma(0)\in \mathfrak{d}(w(0)),\quad\tau_a^{-1}(0)=\Set{0}
\end{equation*}
and for all $k\in\Nbn,k>0$ it holds that 
\begin{align}
 \tau_a^{-1}(k)&=\mathrm{glb}\SetCompX{t\geq\tau_a^{-1}(k-1)}{w(t)\notin\mathfrak{d}^{-1}(\gamma(k-1))}&\text{and}&&
 \gamma(k)&\in \mathfrak{d}(w(\tau_a^{-1}(k))),\label{equ:exp:1:1}
\end{align}
where $\mathrm{glb}$ denotes the greatest lower bound and $\AllQ{G\in\Gamma}{\mathfrak{d}^{-1}(G)=I_{G}}$. This construction generates a time scale transformation where different points in $\dom{\tau_a}$ are mapped to different points in $\Nbn$ as depicted in Fig.~\ref{fig:timescale1} (middle). We therefore call $\tau_a$ a \textit{point to point time scale transformation}.
$\signalmap_a$ triggers an external event when $w$ leaves its current interval, generating the external signal depicted in Fig.~\ref{fig:timescale2}.
It is easy to see that $\tau_a$ can be used to define a \textit{set to point time scale transformation} 
\begin{equation}\label{equ:exp:1:2}
 \tau_b^{-1}(k)=\left[\tau_a^{-1}(k),\tau_a^{-1}(k+1)\right),
\end{equation}
depicted in Fig.~\ref{fig:timescale1} (bottom).
Every point in $\T$ is in the domain of $\tau_b$.  Combining the construction of $\tau_b$ in \eqref{equ:exp:1:2} with the construction of $\gamma$ in \eqref{equ:exp:1:1} defines a signal map $\signalmap_b$.\\
The resulting $\signalmap$-dynamical systems $\Ep{i}=\ensuremath{(\T,\allowbreak\Nbn,\allowbreak\WT,\allowbreak\Gamma,\allowbreak\Beh,\allowbreak\BehE{},\allowbreak\signalmap_{i})},~i\in\Set{a,b}$ then only differ w.r.t. their timescale transformations included in $\signalmap_a$ and $\signalmap_b$. 
\end{example}
 \begin{figure}[htb]
 \begin{center}
 \begin{tikzpicture}[auto,scale=1]
 \begin{pgfonlayer}{background}
   \path      (0,0) node (o) {
      \includegraphics[width=0.525\linewidth]{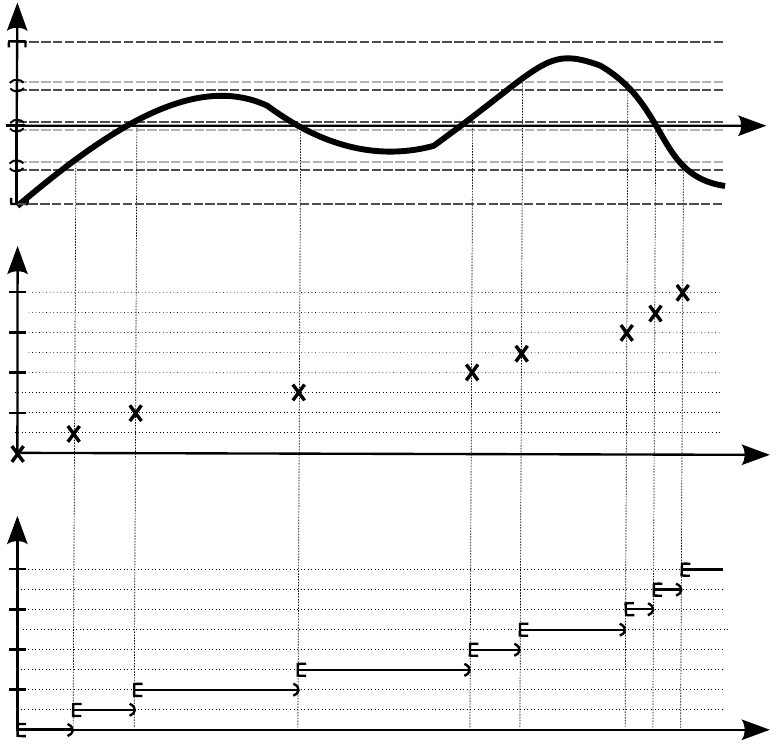}};
 \end{pgfonlayer}
 \begin{small}
  \begin{pgfonlayer}{foreground}
  \path (o.north west)+(-0.1,-0.2) node {$\W$};
  \path (o.north west)+(-0.1,-0.7) node {$10$};
  \path (o.north west)+(-0.1,-1.2) node {$5$};
  \path (o.north west)+(-0.1,-1.7) node (b) {$0$};
  \path (o.north west)+(-0.3,-2.2) node {$-5$};
  \path (o.north west)+(-0.3,-2.7) node  {$-10$};  
   \path (o.north east)+(0,-2.4) node (t2) {$w\inps\W^\T$};  
   \path (o.north east)+(0.1,-1.8) node  {$T$};
   \path (b)+(0.1,-1.6) node {$\Nbn$};
   \path (b)+(0.1,-2) node {$8$};    
   \path (b)+(0.1,-2.5) node {$6$};  
   \path (b)+(0.1,-3) node {$4$}; 
   \path (b)+(0.1,-3.5) node {$2$}; 
   \path (b)+(0.1,-4) node (c) {$0$};
   \path (t2)+(0,-1.2) node  {$\tau_a\inps\Nbn^\T$};  
   \path (t2)+(0,-3.4) node  (t3) {$T$};   
   \path (c)+(0,-0.8) node {$\Nbn$};
   \path (c)+(0,-1.3) node {$8$};    
   \path (c)+(0,-1.8) node {$6$};  
   \path (c)+(0,-2.3) node {$4$}; 
   \path (c)+(0,-2.8) node {$2$}; 
   \path (c)+(0,-3.3) node (c) {$0$};
   \path (t3)+(0,-1.2) node  {$\tau_b\inps\Nbn^\T$};  
   \path (t3)+(0,-3.3) node  {$T$};     
 \end{pgfonlayer}
  \end{small}
 \end{tikzpicture}
 \end{center}
 \caption{Illustration of point to point ($\tau_a$) and set to point ($\tau_b$) time scale transformations in Ex.\ref{exp:1}.}\label{fig:timescale1}
      \vspace{-0.5cm}
 \end{figure}
\begin{figure}[htb!]
\begin{center}
 \begin{tikzpicture}[auto,scale=1.01]
 \begin{pgfonlayer}{background}
   \path      (0,0) node (o) {
      \includegraphics[width=0.6\linewidth]{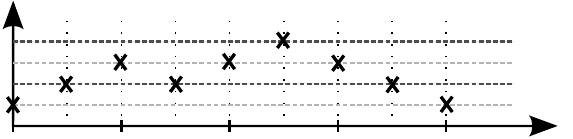}};
 \end{pgfonlayer}
  \begin{small}
 \begin{pgfonlayer}{foreground}
  \path (o.north west)+(0,-0.3) node (a) {$\Gamma$};
  \path (o.north west)+(0,-1) node (b) {$p_2$};
  \path (o.north west)+(0,-1.4) node (b) {$p_1$};
  \path (o.north west)+(-0.1,-1.8) node (b) {$m_1$};
  \path (o.north west)+(-0.1,-2.2) node (b) {$m_2$};  
    \path (o.north east)+(-0.1,-2.9) node (t1) {$\Nbn$};  
    \path (o.north east)+(-0.1,-1.5) node (t1) {$\gamma\inps\Gamma^{\Nbn}$};
  \path (o.north west)+(0.4,-2.9) node (b) {$0$}; 
  \path (o.north west)+(2.4,-2.9) node (b) {$2$}; 
  \path (o.north west)+(4.4,-2.9) node (b) {$4$};
  \path (o.north west)+(6.4,-2.9) node (b) {$6$};
  \path (o.north west)+(8.4,-2.9) node (b) {$8$};
 \end{pgfonlayer}
   \end{small}
 \end{tikzpicture}
 \end{center}
   \vspace{-0.7cm}
 \caption{Illustration of the external signal constructed using event triggered discretization in Ex.\ref{exp:1}.}\label{fig:timescale2}
 \end{figure}

States are internal variables for which the axiom of state holds, i.e., all relevant information on the past of the system is captured by those variables. As discussed in \cite{SchmuckRaisch2014_ControlLetters}, two variants of the state property exist for behavioral systems, a synchronous and an asynchronous one. The latter is characterized by \eqref{equ:StateSpaceDynamicalSystem:2}, the former by \eqref{equ:StateSpaceDynamicalSystem:2} if ${t=t_1=t_2}$.
It will be discussed later in Remark~\ref{rem:WhyAsyncSS} why we restrict attention to asynchronous state space systems. An asynchronous state space $\signalmap$-dynamical systems is a system where the asynchronous state property is preserved by the signal map $\signalmap$.

\begin{definition}[\cite{SchmuckRaisch2014_HSCC}, Def.2]\label{def:SSDynSysInducesFiniteV}
Let $\Ep{}=\EpRhsDisc$ be a $\signalmap$-dynamical system, $\X$ be a set and $\BehS{}\subseteq(\WT\times\X)^\T$. 
Then $\EpS{}=\EpSRhsDisc{}$ is an \textbf{asynchronous state space $\signalmap$-dynamical system} if
 \begin{equation}\label{equ:def:SSDynSysInducesFiniteV}
   \AllQSplit{\Tuple{w_1,x_1}\in\BehS{},\Tuple{w_2,x_2}\in\BehS{},t_1,t_2\in\T,\Tuple{\gamma_2,\tau_2}\in\signalmap(w_2),\Tuple{\gamma_1,\tau_1}\in\signalmap(w_1),k_1,k_2\in\Nbn}{\propImp{
   \begin{propConjA}
    x_1(t_1)=x_2(t_2)\\
    k_1=\timescaleUp{\tau_1}(t_1)\\
    k_2=\timescaleUp{\tau_2}(t_2)
   \end{propConjA}}{
\begin{propConjA}
 \CONCAT{\Tuple{w_1,x_1}}{t_1}{t_2}{\Tuple{w_2,x_2}}\in\BehS{}\\
\Tuple{\CONCAT{\gamma_1}{k_1}{k_2}{\gamma_2},\CONCAT{\tau_1}{t_1}{t_2}{\BR{\tau_2+c}}}\in\signalmap(\CONCAT{w_1}{t_1}{t_2}{w_2})
\end{propConjA}}}
\end{equation}
where $\AllQ{t\in\T}{c(t)=k_1-k_2}$.
Furthermore, $\EpS{}$ is an asynchronous state space $\signalmap$-dynamical system for $\Ep{}$, if ${\projState{\WT}{\BehS{}}=\Beh}$.
\end{definition} 
Since possibly not all states are reachable by a state trajectory in $\projState{\X}{\BehS{}}$, following \cite[Def.5]{SchmuckRaisch2014_HSCC}, we define reachable subsets of the state space.\smalllb

\begin{definition}\label{def:TimeIndexedStateSpace1}
Let $\EpS{}=\EpSRhsDisc{}$ be an asynchronous state space $\signalmap$-dynamical system. Then the \textbf{internal and external reachable state spaces} $\XT{}\subseteq\X$ and $\XK{}\subseteq\X$, respectively, are defined as 
\begin{align*}
 &\XT{}\deff\bigcup_{t\in\T}\Xt{t}{}\quad\text{and}\quad \XK{}\deff\bigcup_{k\in\Nbn}\Xk{k}{}\quad	\text{s.t.}\notag\\
 &\Xt{t}{}\deff\SetCompX{\xi}{\ExQ{\Tuple{w,x}\in\BehS{}}{x(t)=\xi}} \quad \text{and}\notag\\
 &\Xk{k}{}\deff\SetCompX{\xi}{
 \ExQ{\Tuple{w,x}\in\BehS{},\Tuple{\gamma,\tau}\in\signalmap(w),t\in\timescaleDown{\tau}(k)}{x(t)=\xi}
 }.
\end{align*}
Now let $\zeta$ be a finite string of symbols from $\Gamma$.
Then $\Xx{}{\zeta}:=\bigcup_{k\geq\lengthw{\zeta}}\Xxr{}{k}{\zeta}$ is the set of states compatible with a \enquote{recent past} $\zeta$ s.t. $\forall k\geq\lengthw{\zeta}$
\begin{equation}\label{equ:SetOfReachableStates}
\Xxr{}{k}{\zeta}=:\SetCompX{\xi}{
\ExQ{\Tuple{w,x}\in\BehS{},\Tuple{\gamma,\tau}\in\signalmap(w),t\in\timescaleDown{\tau}(k)}{\begin{propConjA}
                                          x(t)=\xi\\
					  \zeta=\gamma\ll{k-\lengthw{\zeta},k-1}
                                         \end{propConjA}}
}
\end{equation}
is the set of states reachable at time $t$ corresponding to external time $k$ and compatible with a \enquote{recent past} $\zeta$.
\end{definition}

Obviously $\Xxr{}{k}{\cdot}\subseteq\Xk{k}{}$.\\
Since the set $\Xx{}{\cdot}$ will be extensively used in the remainder of this paper, we illustrate its construction by an example.\smalllb

\begin{example}\label{exp:Xx}\normalfont
 Consider $\EpS{a}$ and $\EpS{b}$ constructed in Ex.\ref{exp:1} and assume $X=W$, i.e., signals $w\in\Beh$ can be asynchronously concatenated. Then, with $\BehS{}=\SetComp{\Tuple{x,w}}{\propConj{x=w}{w\in\Beh}}$, the systems $\EpS{i}=(\T,\allowbreak\Nbn,\allowbreak\WT\nobreak\times\nobreak\X,\allowbreak\Gamma,\allowbreak\BehS{},\allowbreak\BehE{},\allowbreak\signalmap_{i}),~i\in\Set{a,b},$ are asynchronous state space $\signalmap$-dynamical systems.\\
 First let $\zeta=\lambda$, i.e., $\lengthw{\zeta}=0$, then
 \begin{align*}
   \Xxr{a}{0}{\lambda}&=\Set{-10,10} &\Xxr{b}{0}{\lambda}&=\UNION{[-10,-4)}{(4,10]}
  \end{align*}
 are the sets of states with a \enquote{recent past} $\lambda$ reached at a time $t$ corresponding to external time $k=0$.
 Now consider strings $\zeta\in\Gamma=\Set{m_2,m_1,p_1,p_2}$, i.e., $\lengthw{\zeta}=1$, then
    \begin{align*}   
   \Xx{a}{m_2}&=\Set{-4}	&\Xx{b}{m_2}&=(-6,1)\\
    \Xx{a}{m_1}&=\Set{-6,1}	&\Xx{b}{m_1}&=\UNION{[-10,-4)}{(-1,6)}
  \end{align*}
    \begin{align*}
   \Xx{a}{p_1}&=\Set{-1,6}	&\Xx{b}{p_1}&=\UNION{(-6,1)}{(4,10]}\\
   \Xx{a}{p_2}&=\Set{4}		&\Xx{b}{p_2}&=(-1,4)
  \end{align*}
are the sets of states compatible with a \enquote{recent past} $G\in\Gamma$.
Observe that with a point to point time scale transformation only states reached at sampling instances are in $\Xx{a}{\cdot}$.
\end{example}

\section{$l$-complete Approximations}\label{sec:lcomp}
It was shown in \cite{SchmuckRaisch2014_ControlLetters} that asynchronous $l$-complete approx\-imations can be used to generate a finite state abstraction of a dynamical system, if it evolves on the discrete time axis $\Nbn$ and the external signal space is finite. If $\Gamma$ is finite, the external dynamical system $\EE{}=\Tuple{\Nbn,\Gamma,\BehE{}}$, with $\BehE{}$ as in \eqref{equ:BehE}, meets these requirements. Following \cite{MoorRaisch1999} and \cite{SchmuckRaisch2014_ControlLetters}, a system $\E'_E=\Tuple{\Nbn,\Gamma,\Beh'_E}$ is an  asynchronous $l$-complete approximation of $\EE{}$, if 
\begin{inparaenum}[(i)]
 \item  $\E'_E$ is asynchronously $l$-complete and
 \item  $\Beh'_E\supseteq\BehE{}$.
 \end{inparaenum}
 $\E'_E$ is a strongest asynchronous $l$-complete approximation of $\EE{}$ if 
 \begin{inparaenum}[(i)]
 \item it is an  asynchronous $l$-complete approximation of $\EE{}$ and
 \item for all other asynchronous $l$-complete approximations $\EE{}''=\Tuple{\Nbn,\Gamma,\BehE{}''}$ of $\EE{}$ it holds that $\Beh'_E\subseteq\BehE{}''$.
\end{inparaenum}
\smalllb

\begin{lemmaRef}[\cite{SchmuckRaisch2014_ControlLetters}, Lemma 7]\label{lem:constructElMax_general_TA}
 Let $\EE{}=\Tuple{\Nbn,\Gamma,\BehE{}}$ be a dynamical system and 
  \begin{equation*}\label{equ:lem:constructElMax_general_TA}
   \BehElaMax{}:=\SetCompX{\gamma\in\Gamma^{\Nbn}}{\begin{propConjA}
\gamma\ll{0,l}\in\BehE{}\ll{0,l}\\
 \AllQ{k\in\Nbn}{\gamma\ll{k,k+l}\inps \displaystyle\bigcup_{k'\in\Nbn}\hspace{-0.1cm}\BehE{}\ll{k',k'+l}}
\end{propConjA}}
.
 \end{equation*}
 Then ${\EElaMax{}:=\Tuple{\Nbn,\Gamma,\BehElaMax{}}}$ is the unique \textbf{strongest asyn\-chronous $l$-complete approximation} of $\EE{}$.
\end{lemmaRef}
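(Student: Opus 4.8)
The plan is to verify for $\EElaMax{}$ the two defining properties of a \emph{strongest} asynchronous $l$-complete approximation of $\EE{}$ — that it \emph{is} such an approximation, and that it is contained in every other one — and then to read off uniqueness from mutual inclusion.

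First I would check $\BehElaMax{}\supseteq\BehE{}$: for $\gamma\in\BehE{}$ the restriction $\gamma\ll{0,l}$ lies in $\BehE{}\ll{0,l}$ by definition of the restricted behavior, and for every $k\in\Nbn$ the window $\gamma\ll{k,k+l}$ lies in $\BehE{}\ll{k,k+l}$, hence in $\bigcup_{k'\in\Nbn}\BehE{}\ll{k',k'+l}$; thus $\gamma$ satisfies both conditions defining $\BehElaMax{}$. Next I would show that $\EElaMax{}$ is asynchronously $l$-complete, i.e. that \eqref{equ:lcomplete} holds with $\T=\Nbn$, $\WT=\Gamma$ and $\Beh=\BehElaMax{}$. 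The implication \enquote{membership in $\BehElaMax{}$ implies the window conditions} is valid for every behavior and needs no argument; the content lies in the converse. For the converse, the key observation is that passing from $\BehE{}$ to $\BehElaMax{}$ creates no new windows of length $l+1$:
\[
\BehElaMax{}\ll{0,l}\subseteq\BehE{}\ll{0,l}
\qquad\text{and}\qquad
\bigcup_{k'\in\Nbn}\BehElaMax{}\ll{k',k'+l}\subseteq\bigcup_{k'\in\Nbn}\BehE{}\ll{k',k'+l}\,.
\]
These inclusions are immediate from the two conditions defining $\BehElaMax{}$: any $\sigma\in\BehElaMax{}\ll{k',k'+l}$ equals $\gamma'\ll{k',k'+l}$ for some $\gamma'\in\BehElaMax{}$, and the $k=k'$ instance of the second defining condition places $\sigma$ in $\bigcup_{k''\in\Nbn}\BehE{}\ll{k'',k''+l}$; the initial window is treated the same way via the first defining condition. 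Hence, whenever $\gamma$ satisfies the window conditions for $\BehElaMax{}$, it already satisfies them for $\BehE{}$, which by definition means $\gamma\in\BehElaMax{}$.

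Then I would argue that $\EElaMax{}$ is strongest. Let $\EE{}''=\Tuple{\Nbn,\Gamma,\BehE{}''}$ be an arbitrary asynchronous $l$-complete approximation of $\EE{}$, so $\BehE{}''\supseteq\BehE{}$ and \eqref{equ:lcomplete} holds for $\BehE{}''$. For $\gamma\in\BehElaMax{}$, monotonicity of restriction under $\BehE{}\subseteq\BehE{}''$ gives $\gamma\ll{0,l}\in\BehE{}''\ll{0,l}$ and $\gamma\ll{k,k+l}\in\bigcup_{k'\in\Nbn}\BehE{}''\ll{k',k'+l}$ for all $k$; the asynchronous $l$-completeness of $\EE{}''$ then forces $\gamma\in\BehE{}''$, i.e. $\BehElaMax{}\subseteq\BehE{}''$. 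Uniqueness is immediate, since two strongest approximations each contain the other and hence have equal behaviors; combined with the previous two steps this identifies $\EElaMax{}$ as the unique strongest asynchronous $l$-complete approximation.

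I expect the only genuinely delicate point to be the \enquote{no new $l+1$-windows} inclusions in the second step; everything else is routine bookkeeping with restrictions together with the trivial direction of \eqref{equ:lcomplete}. In writing up the details one still has to respect the string/index conventions fixed at the end of Section~\ref{sec:prelim}, so that $\gamma\ll{k,k+l}$ and the members of $\bigcup_{k'\in\Nbn}\BehE{}\ll{k',k'+l}$ are compared as strings of the common length $l+1$, and should note that boundary cases (e.g. $\BehE{}\ll{0,l}=\emptyset$, forcing $\BehElaMax{}=\emptyset$) cause no difficulty.
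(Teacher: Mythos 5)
Your proposal is correct. Note that the paper does not prove this statement itself --- it is quoted as Lemma~7 of \cite{SchmuckRaisch2014_ControlLetters} --- but your argument is the natural one and matches what that construction is designed to do: the ``no new $(l+1)$-windows'' inclusions $\BehElaMax{}\ll{0,l}\subseteq\BehE{}\ll{0,l}$ and $\bigcup_{k'}\BehElaMax{}\ll{k',k'+l}\subseteq\bigcup_{k'}\BehE{}\ll{k',k'+l}$ are precisely the key step giving asynchronous $l$-completeness, while the superset property, minimality via $l$-completeness of an arbitrary competitor, and uniqueness by mutual inclusion are routine, as you say.
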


Since $\EElaMax{}$ is asynchronously $l$-complete, we can use a trivial state space representation saving the last $l$ symbols of the external signal $\gamma\in\BehE{}$ in the current state.\smalllb

\begin{lemma}\label{lem:CorrPastIndSS}
Let $\EElaMax{}=\Tuple{\Nbn,\Gamma,\BehElaMax{}}$ be an asynchronous $l$-complete dynamical system,
\begin{equation*}
 \textstyle\Z:=\UNION{\BR{\bigcup_{r\in[0,l-1]}\BehElaMax{}\ll{0,r-1}}}{\BR{\bigcup_{k\in\Nbn}\BehElaMax{}\ll{k,k+l-1}}}
\end{equation*}
be a set 
and $\BehlaMaxS{}\subseteq(\Gamma\times\Z)^{\Nbn}$ \SUCHTHAT $\Tuple{\gamma,\z}\in\BehlaMaxS{}$ \IFF
\begin{equation*}\label{equ:lem:CorrPastIndSS}
 \z(k)=\begin{cases}
\gamma\ll{0,k-1}&0\leq k<l\\\gamma\ll{k-l,k-1}&k\geq l
\end{cases}\quad \text{and} \quad \gamma\in\BehElaMax{}.
\end{equation*}
Furthermore, let 
\begin{equation}\label{equ:signalmap_l}
 \philaMax\deff\BehElaMax{}\fun\twoup{\Gamma^{\Nbn}\times\timescale}~\text{s.t.}~\AllQ{\gamma\in\BehElaMax{}}{\philaMax(\gamma)=\Set{\gamma,\I}}.
\end{equation}
 Then
 \begin{inparaenum}[(i)]
  \item $\ElaMaxS{}:=\Tuple{\Nbn,\Gamma\times\Z,\BehlaMaxS{}}$ is an asynchronous state space system for $\EElaMax{}$ and 
  \item $\EplaMaxS{}:=\EplaMaxSRhs{}$ is an asynchronous $\signalmap$-dynamical state space system 
  for $\EElaMax{}$
 \end{inparaenum}. 
\end{lemma}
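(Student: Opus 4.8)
The plan is to verify the two claims essentially by unwinding definitions, exploiting the very rigid structure of $\BehlaMaxS{}$ (which records the last $l$ symbols) and of $\philaMax$ (which is the identity-timescale, singleton-valued map).

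\emph{Part (i): $\ElaMaxS{}$ is an asynchronous state space system for $\EElaMax{}$.} First I would check the projection condition $\projState{\Gamma}{\BehlaMaxS{}}=\BehElaMax{}$. The inclusion $\subseteq$ is immediate from the defining \enquote{iff} of $\BehlaMaxS{}$, which requires $\gamma\in\BehElaMax{}$; for $\supseteq$, given $\gamma\in\BehElaMax{}$ I simply \emph{define} $\z$ by the case distinction in the statement and observe $\Tuple{\gamma,\z}\in\BehlaMaxS{}$, so $\gamma$ lies in the projection. Next I would verify the asynchronous state space axiom \eqref{equ:StateSpaceDynamicalSystem:2}: take $\Tuple{\gamma_1,\z_1},\Tuple{\gamma_2,\z_2}\in\BehlaMaxS{}$ and $k_1,k_2\in\Nbn$ with $\z_1(k_1)=\z_2(k_2)$, and set $\Tuple{\gamma_3,\z_3}=\CONCAT{\Tuple{\gamma_1,\z_1}}{k_1}{k_2}{\Tuple{\gamma_2,\z_2}}$. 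The key observation is that $\z_1(k_1)=\z_2(k_2)$ means $\gamma_1$ and $\gamma_2$ have the same \enquote{recent past} string of length $\min(k_1,l)=\min(k_2,l)$ at those instants; hence the concatenated signal $\gamma_3$ has, at every window $\ll{k,k+l}$, a window that already occurs in $\gamma_1$ or in $\gamma_2$ (the only nontrivial windows are those straddling the splice point, and these are exactly windows of $\gamma_1$ because the suffix of $\gamma_1$ before $k_1$ agrees with the suffix of $\gamma_2$ before $k_2$). Combined with $\gamma_3\ll{0,l}=\gamma_1\ll{0,l}\in\BehE{}\ll{0,l}$, the characterization of $\BehElaMax{}$ in \REFlem{lem:constructElMax_general_TA} gives $\gamma_3\in\BehElaMax{}$; and by construction $\z_3$ is exactly the recent-past encoding of $\gamma_3$, so $\Tuple{\gamma_3,\z_3}\in\BehlaMaxS{}$. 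This is the main technical point and the only place a careful index bookkeeping is needed.

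\emph{Part (ii): $\EplaMaxS{}$ is an asynchronous state space $\signalmap$-dynamical system for $\EElaMax{}$.} Here the internal and external time axes coincide ($\T=\Nbn$) and $\philaMax$ is the identity-timescale singleton map \eqref{equ:signalmap_l}, so $\timescaleUp{\tau}=\I$, $c\equiv 0$, and $\signalmap$-concatenation of the external signals reduces to ordinary concatenation of the $\Gamma$-components. I would first confirm $\EplaMaxSRhs{}$ is a $\signalmap$-dynamical system in the sense of \REFdef{def:DynSysInducesFiniteV}: $\timescale$ contains $\I$, $\philaMax$ maps $\BehElaMax{}$ into $\twoup{\Gamma^{\Nbn}\times\timescale}$, and the induced external behavior \eqref{equ:BehE} is $\SetComp{\gamma}{\ExQ{\gamma'\in\BehElaMax{},\tau\in\timescale}{\Tuple{\gamma,\tau}\in\philaMax(\gamma')}}=\BehElaMax{}$. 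Then the defining implication \eqref{equ:def:SSDynSysInducesFiniteV} specializes: with $\timescaleUp{\tau_1}=\timescaleUp{\tau_2}=\I$ we get $k_i=t_i$, the first conjunct of the conclusion ($\CONCAT{\Tuple{\gamma_1,\z_1}}{k_1}{k_2}{\Tuple{\gamma_2,\z_2}}\in\BehlaMaxS{}$) is exactly what Part (i) established, and the second conjunct asks that $\Tuple{\CONCAT{\gamma_1}{k_1}{k_2}{\gamma_2},\I}\in\philaMax(\CONCAT{\gamma_1}{k_1}{k_2}{\gamma_2})$, which holds by \eqref{equ:signalmap_l} provided $\CONCAT{\gamma_1}{k_1}{k_2}{\gamma_2}\in\BehElaMax{}$ — again supplied by Part (i). Finally $\projState{\WT}{\BehlaMaxS{}}=\BehElaMax{}$ from Part (i) gives the \enquote{for $\Ep{}$} clause. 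I expect the whole of (ii) to be essentially a corollary of (i) plus the triviality of $\philaMax$, so the single real obstacle is the window-bookkeeping argument for closure of $\BehlaMaxS{}$ under asynchronous concatenation in Part (i).
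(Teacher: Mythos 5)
Your part (ii) is essentially the paper's own argument: the paper obtains (ii) from (i) plus the observation that $\CONCAT{\I}{k_1}{k_2}{(\I+(k_1-k_2))}=\I$, which is exactly what your specialization of \eqref{equ:def:SSDynSysInducesFiniteV} to the trivial map $\philaMax$ amounts to (only note that $c\equiv k_1-k_2$, not $c\equiv 0$; the conclusion is unchanged). For part (i), however, the paper does not give a direct proof at all --- it cites Lemmas 3 and 6 of \cite{SchmuckRaisch2014_ControlLetters} --- whereas you attempt a self-contained verification of \eqref{equ:StateSpaceDynamicalSystem:2}, and it is precisely in your \enquote{window bookkeeping} step that there is a gap.

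Concretely, write $\Tuple{\gamma_3,z_3}=\CONCAT{\Tuple{\gamma_1,z_1}}{k_1}{k_2}{\Tuple{\gamma_2,z_2}}$. First, the windows of $\gamma_3$ straddling the splice point are, by the matching recent pasts, windows of $\gamma_2$ (their tail comes from $\gamma_2$), not of $\gamma_1$ as you assert; this is a sign error in the argument but repairable. The real problem is the claim $\gamma_3\ll{0,l}=\gamma_1\ll{0,l}$: it holds only when $k_1>l$. If $k_1<l$ one has $k_2=k_1$ and in fact $\gamma_3=\gamma_2$, which saves that case (by an argument you do not give). But in the boundary case $k_1=l<k_2$ the matched states are $z_1(l)=\gamma_1\ll{0,l-1}=\gamma_2\ll{k_2-l,k_2-1}=z_2(k_2)$, and then $\gamma_3\ll{0,l}=\gamma_1\ll{0,l-1}\sconc\gamma_2(k_2)=\gamma_2\ll{k_2-l,k_2}$, which the characterization in \REFlem{lem:constructElMax_general_TA} certifies only as a sliding window, i.e.\ as an element of $\bigcup_{k'}\BehE{}\ll{k',k'+l}$, not as an element of $\BehE{}\ll{0,l}$. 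So the initial-window requirement for $\gamma_3\in\BehElaMax{}$ does not follow from anything you have written; indeed, with only the quoted characterization one can build apparent counterexamples to this very step (take $l=1$ and a symbol that initially can only be followed by one letter but in the interior also by another), so the case cannot be closed by more careful indexing alone --- it needs the precise constructions and statements of \cite{SchmuckRaisch2014_ControlLetters}, which is exactly what the paper imports by citation. As it stands, your proof of (i) is therefore incomplete at its central step, while your (ii) coincides with the paper's reasoning.
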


\begin{proof}
Part (i) is proven by \cite[Lemma 3, Lemma 6]{SchmuckRaisch2014_ControlLetters}. For the second part observe that 
 part (i) implies \eqref{equ:StateSpaceDynamicalSystem:2}. Now the trivial signal map $\philaMax$ immediately implies that also \eqref{equ:def:SSDynSysInducesFiniteV} holds, since\linebreak $\CONCAT{\I}{k_1}{k_2}{(\I+(k_1-k_2))}=\I$.
\end{proof}\vspace{0.1cm}

The state space system $\ElaMaxS{}$ constructed in \REFlem{lem:CorrPastIndSS} is a finite state abstraction of the external behavior of $\EpS{}$. Note that, by construction, $\BehElaMax{}\supseteq\Beh_{E}^{(l+1)^\uparrow}$. Hence the parameter $l$ can be used to adjust approximation accuracy.\\
Observe that $\EplaMaxS{}$ and $\ElaMaxS{}$ in \REFlem{lem:CorrPastIndSS} exhibit the same external behavior. $\EplaMaxS{}$ is the trivial transformation of $\ElaMaxS{}$ into the framework of $\signalmap$-dynamical systems. This construction is needed to formally relate the original system $\EpS{}$ to its finite state abstraction $\ElaMaxS{}$ using the framework of  
$\signalmap$-dynamical systems as discussed in \REFsec{sec:SimRel}-\ref{sec:SimRelApprox}.\\
As a main result from \cite{SchmuckRaisch2014_ControlLetters}, the abstraction $\ElaMaxS{}$ (and therefore also $\EplaMaxS{}$) can be realized by a finite state machine (FSM) if $\Gamma<\infty$.\smalllb

\begin{lemmaRef}[\cite{SchmuckRaisch2014_ControlLetters}, Lemma 6]\label{lem:realizationTA}
Given the premises of \REFlem{lem:CorrPastIndSS}, $\length{\Gamma}<\infty$, $\Z_0=\Set{\lambda}$ and 
 \begin{align*}\label{equ:delta_timeInv}
\tr=&\SetCompX{\Tuple{\zeta,\sigma,\zeta\sconc\sigma}}{\propConj{\lengthw{\zeta}<l}{\zeta\sconc\sigma\in\BehElaMax{}\ll{0,\lengthw{\zeta}}}}\\
&\cup\SetCompX{\Tuple{\zeta,\sigma,\zeta\lb{1,l-1}\sconc\sigma}}{\propConj{\lengthw{\zeta}=l}{\textstyle\zeta\sconc\sigma\in\bigcup_{k'\in\Nbn}\BehElaMax{}\ll{k',k'+l}}},
\end{align*}
the system $\ElaMaxS{}=\Tuple{\Nbn,\Gamma,\Z,\BehlaMaxS{}}$ is realized by the finite state machine $\Psys=(\Z,\Gamma,\tr,\Z_0)$.
\end{lemmaRef}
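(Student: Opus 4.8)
The plan is to verify the two claims packed into ``is realized by'': first that $\Psys=(\Z,\Gamma,\tr,\Z_0)$ is a \emph{finite} state machine, and then that $\Beh_f(\Psys)=\BehlaMaxS{}$. Finiteness is immediate: $\length{\Gamma}<\infty$ by hypothesis, and by construction every element of $\Z$ is a finite string over $\Gamma$ of length at most $l$ (a restriction of some $\gamma\in\BehElaMax{}$ to an interval with at most $l$ integer points), so $\Z\subseteq\bigcup_{j=0}^{l}\Gamma^{j}$ is finite; moreover $\lambda\in\Z$, and if $\BehElaMax{}=\0$ both behaviours are empty and there is nothing to prove. Hence $\Beh_f(\Psys)$ is well defined.

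For the behavioural equality I would use the explicit form of $\BehlaMaxS{}$ from Lemma~\ref{lem:CorrPastIndSS}(i): $(\gamma,\z)\in\BehlaMaxS{}$ iff $\gamma\in\BehElaMax{}$ and $\z$ is the canonical ``last $l$ symbols'' encoding of $\gamma$, namely $\z(k)=\gamma\ll{0,k-1}$ for $k<l$ and $\z(k)=\gamma\ll{k-l,k-1}$ for $k\geq l$. Then I show $\Beh_f(\Psys)=\BehlaMaxS{}$ by two inclusions. For ``$\subseteq$'', given such $(\gamma,\z)$ one has $\z(0)=\lambda\in\Z_0$, and for the transition at time $k$: if $k<l$ then $\lengthw{\z(k)}=k<l$ and $\z(k+1)=\z(k)\sconc\gamma(k)=\gamma\ll{0,k}\in\BehElaMax{}\ll{0,k}=\BehElaMax{}\ll{0,\lengthw{\z(k)}}$ (because $\gamma\in\BehElaMax{}$), so $(\z(k),\gamma(k),\z(k+1))$ lies in the first set defining $\tr$; if $k\geq l$ then $\lengthw{\z(k)}=l$, $\z(k+1)=\z(k)\lb{1,l-1}\sconc\gamma(k)$, and $\z(k)\sconc\gamma(k)=\gamma\ll{k-l,k}\in\bigcup_{k'\in\Nbn}\BehElaMax{}\ll{k',k'+l}$, so the triple lies in the second set; hence $(\gamma,\z)\in\Beh_f(\Psys)$. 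For ``$\supseteq$'', given $(\gamma,\z)\in\Beh_f(\Psys)$ note that the two sets defining $\tr$ are disjoint, the source string having length $<l$ in the first and length exactly $l$ in the second; an induction on $k$ with base $\z(0)=\lambda$ then shows that the rule applicable at step $k$ is forced by $\lengthw{\z(k)}$ and that the unique admissible successor is precisely the canonical encoding evaluated at $k+1$. So $\z$ is that encoding, and the membership conditions carried by the triples actually used say exactly that $\gamma\ll{0,k}\in\BehElaMax{}\ll{0,k}$ for $0\leq k<l$ and $\gamma\ll{m,m+l}\in\bigcup_{k'}\BehElaMax{}\ll{k',k'+l}$ for all $m\geq0$; combined with the identifications $\BehElaMax{}\ll{0,r}=\BehE{}\ll{0,r}$ ($r\leq l$) and $\bigcup_{k'}\BehElaMax{}\ll{k',k'+l}=\bigcup_{k'}\BehE{}\ll{k',k'+l}$ (both consequences of $\BehE{}\subseteq\BehElaMax{}$ and Lemma~\ref{lem:constructElMax_general_TA}), these are the conditions defining $\gamma\in\BehElaMax{}$, so $(\gamma,\z)\in\BehlaMaxS{}$.

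The step I expect to be the main obstacle is the induction in the ``$\supseteq$'' direction: one has to check that along every path of $\Psys$ compatible with the initial state $\lambda$ the length-based case distinction on $\tr$ leaves exactly one candidate successor, so that the whole state trajectory is pinned to the canonical encoding --- only then do the per-edge constraints reassemble into membership in $\BehElaMax{}$; in particular one must line up the constraints collected during the transient phase $k<l$ with the initial-window condition in the definition of $\BehElaMax{}$. The remaining parts are routine bookkeeping with finite-string restrictions and the empty-string conventions at the step $k=l-1$. (The statement is \cite[Lemma~6]{SchmuckRaisch2014_ControlLetters}, whose proof may also be invoked directly.)
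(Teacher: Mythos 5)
Your overall architecture is the natural one and, as far as this paper is concerned, there is nothing to compare it against: the statement is imported from \cite{SchmuckRaisch2014_ControlLetters} and no proof is given here. The finiteness argument, the inclusion $\BehlaMaxS{}\subseteq\Beh_f(\Psys)$, and the induction showing that every run of $\Psys$ started in $\lambda$ is pinned to the canonical last-$l$-symbols encoding are all fine. The gap is exactly at the step you yourself flagged as the main obstacle, and it does not close: after the induction, the constraints harvested from the edges are $\gamma\ll{0,k}\in\BehElaMax{}\ll{0,k}$ only for $0\leq k\leq l-1$ (transient transitions) together with $\gamma\ll{m,m+l}\in\bigcup_{k'}\BehE{}\ll{k',k'+l}$ for all $m\geq 0$ (full-length transitions). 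These are \emph{not} the defining conditions of $\BehElaMax{}$ in \REFlem{lem:constructElMax_general_TA}, which additionally require the length-$(l+1)$ initial window to be anchored at zero, i.e.\ $\gamma\ll{0,l}\in\BehE{}\ll{0,l}$. The transition taken at time $k=l$ only certifies $\gamma\ll{0,l}$ as a window occurring \emph{somewhere} in $\BehE{}$, and your identifications $\BehElaMax{}\ll{0,r}=\BehE{}\ll{0,r}$ cannot supply the missing condition, because no collected constraint mentions the window $[0,l]$ at position $0$.

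This is not a presentational slip that more careful bookkeeping repairs under the definitions as reproduced in this paper. Take $l=1$, $\Gamma=\{a,b,c\}$, $\BehE{}=\{ab^\omega,\,ca^\omega\}$. Then $\BehE{}\ll{0,1}=\{ab,ca\}$, the set of length-two windows is $\{ab,bb,ca,aa\}$, and $\BehElaMax{}=\{ab^\omega\}\cup\{ca^nb^\omega:n\geq 1\}\cup\{ca^\omega\}$. The machine of the lemma has transitions $(\lambda,a,a)$, $(\lambda,c,c)$, $(a,a,a)$, $(a,b,b)$, $(b,b,b)$, $(c,a,a)$, so it admits the run $\lambda,a,a,a,\dots$ over the external signal $a^\omega$ paired with its canonical encoding, although $a^\omega\notin\BehElaMax{}$ since $aa\notin\BehE{}\ll{0,1}$; hence $\Beh_f(\Psys)\nsubseteq\BehlaMaxS{}$ for these data. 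To complete your argument you must either prove that, under the stated premises, $\gamma\ll{0,l-1}\in\BehE{}\ll{0,l-1}$ plus the window conditions forces $\gamma\ll{0,l}\in\BehE{}\ll{0,l}$ (false in general, as the example shows), or work with a variant of the definitions in which the initial condition of \eqref{equ:lcomplete}/\REFlem{lem:constructElMax_general_TA} concerns the window $[0,l-1]$ (equivalently, with a transition relation or initial-state set that records the full initial $(l+1)$-window); under such a reading your bookkeeping goes through verbatim, but as written the direction $\Beh_f(\Psys)\subseteq\BehlaMaxS{}$ is not established.
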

Note, that for time-variant systems, $\BehElaMax{}$ and therefore $\tr$ may not be computable.

\begin{example}\label{exp:Blmax}\normalfont
 Consider the $\signalmap$-dynamical systems $\EpS{a}$ and $\EpS{b}$ constructed in Ex.\ref{exp:1}  and let $l=1$. Then
 \begin{align*}
  \BehE{}\ll{0,1}&=\Set{m_2m_1,p_2p_1},\\
\bigcup_{k'\in\Nbn}\hspace{-1.2mm}\BehE{}\ll{k',k'+1}&=\Set{m_2m_1,m_1m_2,m_1p_1,p_1m_1,p_1p_2,p_2p_1}
 \end{align*}
 and we can construct $\Beh_{E}^{1^\uparrow}$ by playing the domino-game depicted in Fig.~\ref{fig:exp:Blmax:domino}, starting with the domino $m_2m_1$ or $p_2p_1$ and always appending a domino from the set $\bigcup_{k'\in\Nbn}\BehE{}\ll{k',k'+1}$ starting with the last symbol of the previous domino. Observe that for this simple example, $\Beh_{E}^{1^\uparrow}$ is actually identical to $\BehE{}$ implying that $\EE{}=\Tuple{\Nbn,\Gamma,\BehE{}}$ is asynchronous $1$-complete.
 Using \REFlem{lem:CorrPastIndSS} we can construct the state space for $\Sigma_{S}^{1^\uparrow}$ and obtain $Z=\Set{\lambda,m_2,m_1,p_1,p_2}$. 
$\Sigma_{S}^{1^\uparrow}$ is realized by the FSM depicted in Fig.~\ref{fig:exp:Blmax:FSM}.
\end{example}

\begin{figure}[htb]
\begin{minipage}{0.45\linewidth}
\begin{center}
 \begin{tikzpicture}[auto,scale=1.05]
 \begin{pgfonlayer}{background}
   \path      (0,0) node (o) {
      \includegraphics[width=0.5\linewidth]{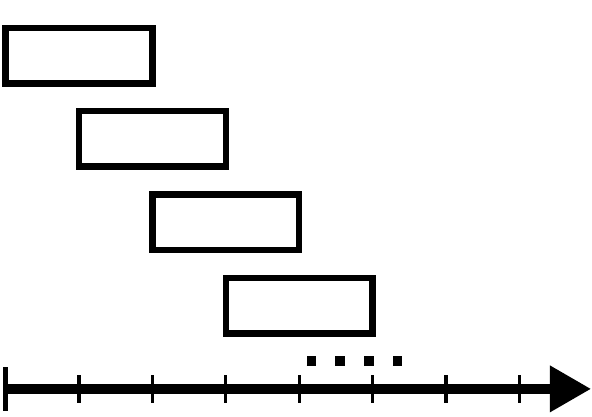}};
 \end{pgfonlayer}
 \begin{pgfonlayer}{foreground}
  \path (o.north west)+(0.55,-0.5) node (i) {${\scriptstyle~m_2~m_1}$};
  \path (i.north west)+(1.1,-0.74) node (ii){$\scriptstyle~m_1~p_1$};
  \path (ii.north west)+(1.1,-0.75) node (iii){$\scriptstyle~p_1~m_1$};
  \path (iii.north west)+(1.1,-0.75) node (iiii){$\scriptstyle~m_1~p_1$};
  \path (o.south west)+(0.15,-0.05) node (t0) {$0$};
  \path (t0)+(1,0) node (t2) {$2$};
  \path (t2)+(0.9,0) node (t4) {$4$};
  \path (t4)+(0.9,0) node (t6) {$6$};
  \path (t6)+(1,0) node (t6) {$\Nbn$};
 \end{pgfonlayer}
 \end{tikzpicture}
 \end{center}
    \vspace{-0.35cm}
 \caption{Domino game generating $\gamma=m_2m_1p_1m_1p_1\hdots$ (Ex.\ref{exp:Blmax}). }\label{fig:exp:Blmax:domino}

 \end{minipage}\hfill
 \begin{minipage}{0.50\linewidth}
 \begin{center}
  \begin{tikzpicture}[auto,scale=1.2]
\node (dummy) at (-0.7,0.5) {};
\node[state] (p0) at (0,0) {$\lambda$};
\node[state] (m_2) at (-1.5,-1) {$m_2$};
\node[state] (m_1) at (-0.5,-1) {$m_1$};
\node[state] (p_1) at (0.5,-1) {$p_1$};
\node[state] (p_2) at (1.5,-1) {$p_2$};
\SFSAutomatEdge{dummy}{}{p0}{}{}
\SFSAutomatEdge{p0}{m_2}{m_2.north}{bend right}{swap}
\SFSAutomatEdge{p0}{p_2}{p_2.north}{bend left}{}
\SFSAutomatEdge{m_2}{m_1}{m_1}{bend left}{}
\SFSAutomatEdge{m_1}{p_1}{p_1}{bend left}{}
\SFSAutomatEdge{p_1}{p_2}{p_2}{bend left}{}
\SFSAutomatEdge{p_2}{p_1}{p_1}{bend left}{}
\SFSAutomatEdge{p_1}{m_1}{m_1}{bend left}{}
\SFSAutomatEdge{m_1}{m_2}{m_2}{bend left}{}
%
\end{tikzpicture}
\end{center}
 \vspace{-0.2cm}
\caption{Finite state machine realizing $\Sigma_{S}^{1^\uparrow}$ (Ex.\ref{exp:Blmax}).}\label{fig:exp:Blmax:FSM}

 \end{minipage}
 \end{figure}

\section{Simulation Relations for $\signalmap$-Dynamical Systems}\label{sec:SimRel}
To be able to investigate the existence of simulation relations between the constructed finite state abstraction $\EplaMaxS{}$ and the original system $\EpS{}$, we review the simulation relations for $\signalmap$-dynamical systems introduced in \cite{SchmuckRaisch2014_HSCC}.
\smalllb

\begin{definition}[\cite{SchmuckRaisch2014_HSCC}, Def.4,6]\label{def:SimRel}
 Let $\EpS{i}=\EpSRhsDisc{i},~i\in{1,2}$ be state space $\signalmap$-dynamical systems.
Then a relation $\R\subseteq\X_1\times\X_2$ is an \textbf{asynchronous simulation relation} from $\EpS{1}$ to $\EpS{2}$ (written $\R\in\SR{\async}{\EpS{1}}{\EpS{2}}$) if
\begin{subequations}\label{equ:def:SimRel}
 \begin{equation}\label{equ:def:SimRel:a}
  \AllQ{\xi_1\in\XK{1}}{\ExQ*{\xi_2\in\XK{2}}{\Tuple{\xi_1,\xi_2}\in\R}}
 \end{equation}
and
 \begin{equation}\label{equ:def:SimRel:b}
 \AllQSplit{\Tuple{w_1,x_1}\in\BehS{1},\Tuple{w',x'}\in\BehS{2}, \Tuple{\gamma_1,\tau_1}\in\signalmap_1(w_1),\Tuple{\gamma',\tau'}\in\signalmap_2(w'), t_1\in\T_1,t_2\in\T_2, k_1,k_2\in\Nbn}{
\propImp{
\begin{propConjA}
\Tuple{x_1(t_1),x'(t_2)}\in\R\\
k_1=\timescaleUp{\tau_1}(t_1)\\
k_2=\timescaleUp{\tau'}(t_2)
\end{propConjA}
}{
\ExQSplit{\Tuple{w_2,x_2}\in\BehS{2}, \Tuple{\gamma_2,\tau_2}\in\signalmap_2(w_2)}{
\begin{propConjA}
\gamma_2=\CONCAT{\gamma'}{k_2}{k_1}{\gamma_1}\\
\AllQ{t\in\T_2, t< t_2}{
\begin{propConjA}
w_2(t)=w'(t)\\
x_2(t)=x'(t)\\
\tau_2(t)=\tau'(t)\\
\end{propConjA}
}\\
x_2(t_2)=x'(t_2)\\
\AllQSplit{k\geq k_2,t_1'\inps\timescaleDown{\tau_1}(k-k_2+k_1),t_1'>t_1}{\ExQ{t_2'\inps\timescaleDown{\tau_2}(k),t_2'\gps t_2 }{
\Tuple{x_1(t_1'),x_2(t_2')}\inps\R}}
\end{propConjA}
}
}}   
\end{equation}

\end{subequations}%
and an \textbf{$l$-initial simulation relation} from $\EpS{1}$ to $\EpS{2}$ (written $\R\in\SR{l}{\EpS{1}}{\EpS{2}}$) if
\begin{subequations}%
 \begin{equation}\label{equ:def:InitSimRel}
  \AllQ{\xi_1\in\Xk{l}{1}}{\ExQ*{\xi_2\in\Xk{l}{2}}{\Tuple{\xi_1,\xi_2}\in\R}}
 \end{equation}
  \end{subequations}
and \eqref{equ:def:SimRel:b} holds.
Furthermore, it is an \textbf{externally synchronous simulation relation} from $\EpS{1}$ to $\EpS{2}$ (written $\R\in\SR{\wsync}{\EpS{1}}{\EpS{2}}$) if
\begin{subequations}\label{equ:def:SimRelSync}
\begin{equation}\label{equ:def:SimRelSync:a}
  \AllQ{k\in\Nbn,\xi_1\in\Xk{k}{1}}{\ExQ*{\xi_2\in\Xk{k}{2}}{\Tuple{\xi_1,\xi_2}\in\R}}
 \end{equation}
 \end{subequations}
and \eqref{equ:def:SimRel:b} holds for $k=k_1=k_2$.
Finally, if $\T=\T_1=\T_2$, then $\R$ is a \textbf{ synchronous simulation relation} from $\EpS{1}$ to $\EpS{2}$ (written $\R\in\SR{\sync}{\EpS{1}}{\EpS{2}}$) if 
\begin{subequations}\label{equ:def:SimRelSync:2}
\begin{equation}\label{equ:def:SimRelSync:2:a}
  \AllQ{t\in\T,\xi_1\in\Xt{t}{1}}{\ExQ*{\xi_2\in\Xt{t}{2}}{\Tuple{\xi_1,\xi_2}\in\R}}
 \end{equation}
 \end{subequations}
and \eqref{equ:def:SimRel:b} holds for $k=k_1=k_2$ and $t=t_1=t_2$.
\end{definition}

Analogously to \cite[Def.5]{SchmuckRaisch2014_HSCC} we can define four types of simulations and bisimulations.\smalllb

\begin{definition}[\cite{SchmuckRaisch2014_HSCC}, Def.5]\label{def:SimilarBisimilar}
Let $\EpS{1}$ and $\EpS{2}$ be state space $\signalmap$-dynamical systems.
Then $\EpS{1}$ is 
\begin{inparaenum}[(i)]\item asynchronously, \item externally synchronously, \item synchronously, and \item $l$-initially \end{inparaenum}
 \textbf{simulated}  by $\EpS{2}$, denoted by \begin{inparaenum}[(i)]\item ${\EpS{1}\kg{\async}\EpS{2}}$, \item ${\EpS{1}\kg{\wsync}\EpS{2}}$, \item ${\EpS{1}\kg{\sync}\EpS{2}}$, and \item ${\EpS{1}\kg{l}\EpS{2}}$, \end{inparaenum} respectively, if there exists an \begin{inparaenum}[(i)]\item asynchronous, \item externally synchronous, \item synchronous, and \item $l$-initial \end{inparaenum} simulation relation from $\ES{1}$ to $\ES{2}$, respectively.\\
Furthermore,
$\EpS{1}$ and $\EpS{2}$ are \begin{inparaenum}[(i)]\item asynchronously, \item externally synchronously, \item synchronously, and \item $l$-initially \end{inparaenum} \textbf{bisimilar}, denoted by \begin{inparaenum}[(i)]\item ${\EpS{1}\hg{\async}\EpS{2}}$, \item ${\EpS{1}\hg{\wsync}\EpS{2}}$, \item ${\EpS{1}\hg{\sync}\EpS{2}}$,  and \item ${\EpS{1}\hg{l}\EpS{2}}$, \end{inparaenum} respectively, if there exists a relation $\R\subseteq\X_1\times\X_2$ \SUCHTHAT $\R$ and $\R^{-1}=\SetComp{\Tuple{x_2,x_1}}{\Tuple{x_1,x_2}\in\R}$ are \begin{inparaenum}[(i)]\item asynchronous, \item externally synchronous, \item synchronous , and \item $l$-initial \end{inparaenum} simulation relations from $\ES{1}$ to $\ES{2}$ and from $\ES{2}$ to $\ES{1}$, respectively.
\end{definition}

In contrast to the asynchronous simulation relation, it was shown in \cite{SchmuckRaisch2014_HSCC} that a $0$-initial simulation relation
is also an (externally) synchronous simulation relation and implies behavioral inclusion.\smalllb

\begin{lemmaRef}[\cite{SchmuckRaisch2014_HSCC}, Lemma 1,~Thm.1]\label{lem:LinitSimImpliesOtherSims}
 Let $\EpS{1}$ and $\EpS{2}$ be state space $\signalmap$-dynamical systems. Then
 \begin{compactenum}[(i)]
   \item $\propImp{\R\in\SR{0}{\EpS{1}}{\EpS{2}}}{
   \R\in\SR{\wsync}{\EpS{1}}{\EpS{2}}}$
  \item $\propImp{\R\in\SR{0}{\EpS{1}}{\EpS{2}}}{
   \R\in\SR{\async}{\EpS{1}}{\EpS{2}}}$
  \item $\propImp{
  \begin{propConjA}
  \R\in\SR{0}{\EpS{1}}{\EpS{2}}\\
  \T_1=\T_2=\Nbn\\
  \AllQ{w_1,\Tuple{\gamma_1,\tau_1}\hspace{-0.1cm}\in\hspace{-0.1cm}\signalmap_1(w_1)}{\tau_1\hspace{-0.1cm}=\hspace{-0.1cm}\I}\\
  \AllQ{w_2,\Tuple{\gamma_2,\tau_2}\hspace{-0.1cm}\in\hspace{-0.1cm}\signalmap_2(w_2)}{\tau_2\hspace{-0.1cm}=\hspace{-0.1cm}\I}
  \end{propConjA}
  }{\R\hspace{-0.08cm}\in\hspace{-0.08cm}\SR{\sync}{\EpS{1}}{\EpS{2}}.}$
 \item ${\propImp{\BR{\ES{1}\kg{l=0}\ES{2}}}{\BR{\BehE{1}\subseteq\BehE{2}}}}$ and
\item ${\propImp{\BR{\ES{1}\hg{l=0}\ES{2}}}{\BR{\BehE{1}=\BehE{2}}}}$.
\end{compactenum}
\vspace{-0.3cm}
\end{lemmaRef}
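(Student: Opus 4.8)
\emph{Proof plan.} The idea is to exploit the fact that all four simulation notions of \REFdef{def:SimRel} rest on one and the same transfer condition \eqref{equ:def:SimRel:b}, and differ only in (a) which instances of \eqref{equ:def:SimRel:b} they demand and (b) over which reachable state sets their initial condition ranges. An $l$-initial simulation relation demands \eqref{equ:def:SimRel:b} in full generality, so for any $\R\in\SR{0}{\EpS{1}}{\EpS{2}}$ the transfer condition of every weaker notion (``for $k_1=k_2$'', resp.\ ``for $k_1=k_2$ and $t_1=t_2$'') is automatic. Hence for (i)--(iii) I would only have to establish the respective initial conditions \eqref{equ:def:SimRelSync:a}, \eqref{equ:def:SimRel:a}, \eqref{equ:def:SimRelSync:2:a}, and (iv)--(v) will drop out of the construction behind the initial condition by reading off an external-behaviour inclusion.

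\emph{The core step, (i).} Here the plan is to \emph{bootstrap}: the $0$-initial condition \eqref{equ:def:InitSimRel} only matches states reached at external time $0$, and a single application of \eqref{equ:def:SimRel:b} then propagates this match to all external times. Concretely, given $k\in\Nbn$ and $\xi_1\in\Xk{k}{1}$ with witness $\Tuple{w_1,x_1}\in\BehS{1}$, $\Tuple{\gamma_1,\tau_1}\in\signalmap_1(w_1)$, $t\in\timescaleDown{\tau_1}(k)$, $x_1(t)=\xi_1$: for $k=0$ I would use \eqref{equ:def:InitSimRel} directly; for $k\geq1$ I would pick $\bar t\in\timescaleDown{\tau_1}(0)$ (nonempty since $\tau_1$ is surjective), obtain from \eqref{equ:def:InitSimRel} a state $\xi_2^{0}\in\Xk{0}{2}$ with $\Tuple{x_1(\bar t),\xi_2^{0}}\in\R$, say $\xi_2^{0}=x'(t_2^{0})$ for $\Tuple{w',x'}\in\BehS{2}$, $\Tuple{\gamma',\tau'}\in\signalmap_2(w')$, $t_2^{0}\in\timescaleDown{\tau'}(0)$, and feed $t_1:=\bar t$, $t_2:=t_2^{0}$, $k_1:=k_2:=0$ into \eqref{equ:def:SimRel:b}. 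Monotonicity of $\tau_1$ yields $\bar t<t$ (else $k=\timescaleUp{\tau_1}(t)\leq\timescaleUp{\tau_1}(\bar t)=0$), so the final conjunct of \eqref{equ:def:SimRel:b}, read at external index $k$ and at $t_1':=t$, returns $t_2'\in\timescaleDown{\tau_2}(k)$ with $\Tuple{\xi_1,x_2(t_2')}\in\R$ and $x_2(t_2')\in\Xk{k}{2}$; this is \eqref{equ:def:SimRelSync:a}. Together with \eqref{equ:def:SimRel:b} restricted to $k_1=k_2$ this gives $\R\in\SR{\wsync}{\EpS{1}}{\EpS{2}}$.

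\emph{The remaining parts.} (ii) is a corollary of (i), since $\XK{i}=\bigcup_{k\in\Nbn}\Xk{k}{i}$ makes \eqref{equ:def:SimRel:a} exactly the union over $k$ of the statements proved for (i), and \eqref{equ:def:SimRel:b} already holds in full. For (iii), the hypotheses $\T_1=\T_2=\Nbn$ and $\tau\equiv\I$ give $\timescaleDown{\tau}(k)=\{k\}$, whence $\Xt{t}{i}=\Xk{t}{i}$ and $k_i=\timescaleUp{\tau_i}(t_i)=t_i$, so that ``$k_1=k_2$'' already forces ``$t_1=t_2$'' and (iii) is again a restriction of (i). For (iv) I would take $\gamma\in\BehE{1}$, pick $\Tuple{w_1,x_1}\in\BehS{1}$ and $\Tuple{\gamma,\tau_1}\in\signalmap_1(w_1)$ (using $\projState{\WT}{\BehS{1}}=\Beh_1$ and \eqref{equ:BehE}), run the construction of (i) with $k_1=k_2=0$, and observe that the external component produced is $\CONCAT{\gamma'}{0}{0}{\gamma}=\gamma$ by \eqref{equ:concat}, accompanied by $\Tuple{\gamma,\tau_2}\in\signalmap_2(w_2)$ with $w_2\in\Beh_2$; hence $\gamma\in\BehE{2}$, i.e.\ $\BehE{1}\subseteq\BehE{2}$. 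Part (v) is then just (iv) applied once to $\R$ and once to $\R^{-1}$.

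\emph{Main obstacle.} The only genuinely non-routine point is the bootstrap in (i): one must see that a \emph{single} invocation of \eqref{equ:def:SimRel:b} anchored at external time $0$ already reaches an arbitrary external index $k$, and in particular that its side condition $t_1'>t_1$ is met --- this is exactly where monotonicity of the time-scale transformation, together with the separate treatment of $k=0$, is needed. The only other thing to watch is the collapse $\CONCAT{\gamma'}{0}{0}{\gamma}=\gamma$ that drives (iv) and (v); everything else is bookkeeping with Definitions~\ref{def:SSDynSysInducesFiniteV}--\ref{def:SimRel}.
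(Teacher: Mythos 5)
This lemma is imported verbatim from \cite{SchmuckRaisch2014_HSCC} (its Lemma~1 and Thm.~1); the present paper contains no proof of it, so your argument can only be checked against \REFdef{def:SimRel} and \REFdef{def:SimilarBisimilar} and the surrounding conventions --- and against those it is correct. Your structural observation is the right one: all four notions share the transfer condition \eqref{equ:def:SimRel:b}, a $0$-initial simulation relation demands it in full generality (arbitrary $k_1,k_2$), so only the respective initial-state conditions need work. The bootstrap for (i) is complete: any witness of $\xi_1\in\Xk{k}{1}$ also puts $x_1(\bar t)\in\Xk{0}{1}$ for $\bar t\in\timescaleDown{\tau_1}(0)$ (nonempty by surjectivity), the $0$-initial condition \eqref{equ:def:InitSimRel} supplies the anchor pair, and the last conjunct of \eqref{equ:def:SimRel:b} with $k_1=k_2=0$, evaluated at $t_1'=t$, lands the match in $\Xk{k}{2}$; your monotonicity argument correctly rules out $t\leq\bar t$ when $k\geq 1$, and the $k=0$ case is handled separately, so \eqref{equ:def:SimRelSync:a} follows. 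Parts (ii) and (v) reduce as you say, and the collapse $\CONCAT{\gamma'}{0}{0}{\gamma}=\gamma$ from \eqref{equ:concat} indeed drives (iv). Two conventions your proof silently consumes deserve a flag, since \REFdef{def:DynSysInducesFiniteV}--\ref{def:SSDynSysInducesFiniteV} as reproduced here do not state them: for (iv) you need $\projState{\WT}{\BehS{1}}=\Beh_1$ (the systems being state space systems \emph{for} their underlying $\signalmap$-dynamical systems), so that the witness $w_1$ of $\gamma\in\BehE{1}$ in \eqref{equ:BehE} carries a state trajectory; and your identification $\Xt{t}{i}=\Xk{t}{i}$ in (iii) needs every $(w,x)\in\BehS{i}$ to admit some $(\gamma,\tau)\in\signalmap_i(w)$, otherwise $\Xt{t}{1}$ could strictly contain $\Xk{t}{1}$ and \eqref{equ:def:SimRelSync:2:a} would not follow. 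Neither point invalidates your strategy --- both are settled in the cited reference --- but they are hypotheses, not consequences, of the definitions printed here.
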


\section{Relating the Original System and its Approximation}\label{sec:SimRelApprox}

Now we investigate the existence of simulation and bisimulation relations between an asynchronous state space $\signalmap$-dynamical system and its strongest asynchronous $l$-complete approximation. Using the results of \REFlem{lem:LinitSimImpliesOtherSims} we first construct a $0$-initial and an $l$-initial simulation relation from the original system to its abstraction. \smalllb                                       

\begin{lemma}\label{lem:ElmaxSpiSimulatesEs}
Let $\EpS{}=\EpSRhsDisc{}$ be an asynchronous state space $\signalmap$-dynamical system and \linebreak ${\EplaMaxS{}=\EplaMaxSRhs{}}$ an asynchronous $\signalmap$-dynamical state space system for the strongest asynchronous $l$-complete approximation 
$\ElMaxE{}$
of the discrete external dynamics $\EE{}=\Tuple{\Nbn,\Gamma,\BehE{}}$, constructed in \REFlem{lem:CorrPastIndSS}. Furthermore, let
 \begin{align}
 \R_0&=\SetCompX{\Tuple{\xi,\zeta}\inps\CARTPROD*{\X}{\Z}}{
\xi\inps
\DiCases
{\Xxr{}{\lengthw{\zeta}}{\zeta}}{\lengthw{\zeta}<l}
{\Xx{}{\zeta}}{\lengthw{\zeta}=l}
}\label{equ:lem:ElmaxSpiSimulatesEs:R0}\\
 \R_l&=\SetCompX{\Tuple{\xi,\zeta}\in\CARTPROD*{\X}{\Z}}{
\propConj{\lengthw{\zeta}=l}{\xi\in\Xx{}{\zeta}}}.\label{equ:lem:ElmaxSpiSimulatesEs:Rl}
\end{align}
be two relations.
Then
\begin{compactenum}[(i)]
 \item $\R_0\hspace{-0.08cm}\in\hspace{-0.08cm}\SR{0}{\EpS{}}{\EplaMaxS{}}$ and
\item $\R_l\hspace{-0.08cm}\in\hspace{-0.08cm}\SR{l}{\EpS{}}{\EplaMaxS{}}$.\smalllb  
\end{compactenum}
\end{lemma}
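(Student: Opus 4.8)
The plan is to verify the two defining conditions of an $l$-initial simulation relation (Def.~\ref{def:SimRel}) for each of $\R_0$ and $\R_l$: the ``initialization'' condition (\eqref{equ:def:SimRel:a} for $\R_0$, since a $0$-initial relation is an asynchronous one by \REFlem{lem:LinitSimImpliesOtherSims}, or \eqref{equ:def:InitSimRel} with $l$ for $\R_l$) and the ``transfer'' condition \eqref{equ:def:SimRel:b}, which is common to both. I would treat $\R_l$ first, since it is the simpler of the two — its state has a full ``memory window'' $\zeta$ of length $l$, so the target state is genuinely determined by the last $l$ external symbols and no ``ramp-up'' bookkeeping is needed — and then show $\R_0$ reduces to essentially the same argument, with the initial segment $0\le k<l$ handled by the $\Xxr{}{\lengthw{\zeta}}{\zeta}$ branch of the case distinction.

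For the initialization condition: given any reachable state $\xi\in\XK{}$ (resp. $\xi\in\Xk{l}{}$), it lies in some $\Xk{k}{}$, hence by Def.~\ref{def:TimeIndexedStateSpace1} there is a trajectory $\Tuple{w,x}\in\BehS{}$, a $\Tuple{\gamma,\tau}\in\signalmap(w)$ and a $t\in\timescaleDown{\tau}(k)$ with $x(t)=\xi$. Set $\zeta$ to be the last $\min(k,l)$ symbols of $\gamma$ preceding external time $k$, i.e. $\zeta=\gamma\ll{k-\lengthw{\zeta},k-1}$ with $\lengthw{\zeta}=\min(k,l)$. Then by construction $\xi\in\Xxr{}{k}{\zeta}\subseteq\Xx{}{\zeta}$ (or $\in\Xxr{}{\lengthw{\zeta}}{\zeta}$ in the short case), so $\Tuple{\xi,\zeta}\in\R$; and $\zeta$ is a valid state of $\EplaMaxS{}$ because $\gamma\in\BehE{}\subseteq\BehElaMax{}$ and $\zeta$ is exactly the window the state map of \REFlem{lem:CorrPastIndSS} assigns at time $k$, so $\zeta\in\Z$ and $\zeta$ is reachable in $\Z$, giving $\zeta\in\Xk{\cdot}{}$ of the abstraction as required. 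For $\R_l$ one additionally uses $k\ge l$, which holds because $\xi\in\Xk{l}{}$ means $k$ can be taken $\ge l$; for $\R_0$ the case $k<l$ lands in the first branch.

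For the transfer condition \eqref{equ:def:SimRel:b}: suppose $\Tuple{x_1(t_1),\zeta}\in\R$ with $k_1=\timescaleUp{\tau_1}(t_1)$, and we are given the abstraction-side data $\Tuple{w',z'}\in\BehlaMaxS{}$, $\Tuple{\gamma',\I}\in\philaMax(w')=\signalmap_{\EplaMaxS{}}(w')$ with $k_2=\I(t_2)=t_2$ (recall $\timescale$-component of $\philaMax$ is the identity, so the abstraction's internal and external time coincide). Because $\Tuple{x_1(t_1),\zeta}\in\R$, the state $\xi=x_1(t_1)$ lies in $\Xx{}{\zeta}$ (or $\Xxr{}{\lengthw{\zeta}}{\zeta}$), so there is \emph{some} trajectory of $\EpS{}$ witnessing $\xi$ at an external time whose preceding window is $\zeta$. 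Using the asynchronous state property \eqref{equ:def:SSDynSysInducesFiniteV} of $\EpS{}$ I would concatenate: take the witnessing trajectory for $\xi$, shift so that its external time aligns with $k_1$, and concatenate it with the required prefix so as to obtain $\Tuple{w_1,x_1}$ restricted to $[0,t_1]$ extended by the witness's future. Simultaneously on the abstraction side I pick $\Tuple{w_2,z_2}\in\BehlaMaxS{}$ defined to agree with $\Tuple{w',z'}$ strictly before $t_2$, have $z_2(t_2)=z'(t_2)=\zeta$, and continue as the state trajectory dictated by reading off the external symbols of the shifted $\EpS{}$-witness (this is legal because $\BehElaMax{}$ is asynchronously $l$-complete, so any length-$(l+1)$ window appearing somewhere can be appended). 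The condition $\gamma_2=\CONCAT{\gamma'}{k_2}{k_1}{\gamma_1}$ is then an accounting identity, and the ``$\forall k\ge k_2$ future relation'' clause follows because at every future external time $k$ the abstraction state is by design the length-$\min(k,l)$ window of the very external signal produced by the $\EpS{}$-side trajectory, so the pair of states stays in $\R$ at matching external times; the ``$\exists t_2'>t_2$ with $t_2'\in\timescaleDown{\tau_2}(k)$'' is immediate since $\timescaleDown{\I}(k)=\{k\}$.

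The main obstacle is the transfer clause, specifically reconciling the two time axes: $\EpS{}$ lives on a possibly continuous $\T$ with a nontrivial time-scale $\tau_1$, while $\EplaMaxS{}$ lives on $\Nbn$ with the identity time-scale, and the last conjunct of \eqref{equ:def:SimRel:b} quantifies over all future external times $k\ge k_2$ and all internal times $t_1'\in\timescaleDown{\tau_1}(k-k_2+k_1)$ with $t_1'>t_1$. One must check that for \emph{every} such $t_1'$ — not just one per external step — the state $x_1(t_1')$ is matched; this is where the precise definition of $\Xxr{}{k}{\zeta}$ and the asynchronous state axiom must be used carefully, because a single external time $k$ may correspond to an interval of internal times (as in the ``set to point'' transformation $\tau_b$ of Ex.~\ref{exp:1}), and all of them must carry a state related to the same abstraction state $\gamma\ll{k-l,k-1}$. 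The key fact making this work is that by \eqref{equ:def:SSDynSysInducesFiniteV} the external signal and its time-scale are preserved under concatenation, so once a window $\zeta$ is fixed, every internal time in $\timescaleDown{\tau}(k)$ reached along the concatenated trajectory sits in $\Xxr{}{k}{\gamma\ll{k-l,k-1}}$ by definition, and hence is $\R$-related to the abstraction state. I would isolate this bookkeeping as the single nontrivial lemma-step and let the rest follow routinely; the statement then follows by combining parts (i),(ii) with \REFlem{lem:LinitSimImpliesOtherSims} where an asynchronous reading of $\R_0$ is needed.
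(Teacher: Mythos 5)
Your outline follows the paper's strategy — use the witness supplied by $x_1(t_1)\in\Xx{}{\zeta}$, splice trajectories via the asynchronous state property, drive the abstraction with the canonical window-state trajectory of \REFlem{lem:CorrPastIndSS}, and discharge the final tracking clause from the definition \eqref{equ:SetOfReachableStates} — but the pivotal splice is carried out in the wrong direction, and as written the construction fails. In \eqref{equ:def:SimRel:b} the abstraction must continue with the \emph{given} future, i.e.\ $\gamma_2=\CONCAT{\gamma'}{k_2}{k_1}{\gamma_1}$, and must track the \emph{given} states $x_1(t_1')$, $t_1'>t_1$. You form the composite \enquote{$(w_1,x_1)$ restricted to $[0,t_1]$ extended by the witness's future} and then let $z_2$ \enquote{continue as the state trajectory dictated by reading off the external symbols of the shifted witness}; that yields an abstraction signal whose continuation after $k_2$ consists of the witness's future symbols, not $\gamma_1$'s, so the first clause of \eqref{equ:def:SimRel:b} is violated and nothing relates $x_1(t_1')$ to $z_2(k)$. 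The correct splice is the reverse one: with $\xt(\tet)=x_1(t_1)$, prepend the witness's \emph{past} to the given trajectory's \emph{future}, $w_1'=\CONCAT{\wt}{\tet}{t_1}{w_1}$, $x_1'=\CONCAT{\xt}{\tet}{t_1}{x_1}$. By \eqref{equ:def:SSDynSysInducesFiniteV} its external signal is $\gamma_1'=\CONCAT{\gt}{\kt}{k_1}{\gamma_1}\in\BehE{}\subseteq\BehElaMax{}$, its window just before the junction is exactly $\zeta=z'(k_2)$, so the canonical state trajectory $z_1'$ satisfies $z_1'(\kt)=z'(k_2)$ and can be spliced with $(\gamma',z')$ using \eqref{equ:StateSpaceDynamicalSystem:2}; and since the future of $x_1'$ is the future of $x_1$, the last clause follows from \eqref{equ:SetOfReachableStates} applied to $(w_1',x_1')$. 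Note that your appeal to asynchronous $l$-completeness (\enquote{any window appearing somewhere can be appended}) glosses over precisely this point: the windows straddling the junction — a tail of $\zeta$ followed by a head of $\gamma_1$'s future — are only known to lie in $\bigcup_{k'\in\Nbn}\BehE{}\ll{k',k'+l}$ because the correctly spliced signal $\gamma_1'$ lies in $\BehE{}$; the reversed composite gives you no information about these junction windows.

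A smaller point: for part (i) you propose to verify \eqref{equ:def:SimRel:a}, citing \REFlem{lem:LinitSimImpliesOtherSims}; that lemma goes the other way ($0$-initial implies asynchronous), so \eqref{equ:def:SimRel:a} does not yield $\R_0\in\SR{0}{\EpS{}}{\EplaMaxS{}}$. What is required is \eqref{equ:def:InitSimRel} with $l=0$: every $\xi\in\Xk{0}{}$ must be matched to a state in $\Zk{0}{}=\Set{\lambda}$, which holds since $\Xxr{}{0}{\lambda}=\Xk{0}{}$. Your actual matching argument (pair $\xi\in\Xk{k}{}$ with the length-$\min(k,l)$ window of $\gamma$ at the same external time $k$) does specialize correctly to $k=0$, and to $k=l$ for part (ii), so this is a framing slip rather than a substantive gap — unlike the concatenation issue above, which is the heart of the proof.
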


\begin{proof}
 See Appendix.
\end{proof}\vspace{0.1cm}

\begin{example}\label{exp:Rl}\normalfont
 Consider the $\signalmap$-dynamical systems $\EpS{i},~i\in\Set{a,b}$ introduced in Ex.\ref{exp:1}, their reachable state sets $\Xx{i}{\cdot}$ determined in Ex.\ref{exp:Xx} and the strongest asynchronous $1$-complete approximation  $\Sigma_{S}^{\signalmap,1^\uparrow}$ constructed in Ex.\ref{exp:Blmax}. Using \REFlem{lem:ElmaxSpiSimulatesEs} we know that 
   \begin{align*}
  \R_{1,a}=&\left\{
  \Tuple{-4,m_2},\Tuple{-6,m_1},\Tuple{1,m_1},
  \Tuple{-1,p_1},\Tuple{6,p_1},\Tuple{4,p_2}\right\},\\
  \R_{1,b}=&
\SetCompX{\Tuple{\xi,m_2}}{\xi\inps(-6,1)}
  \cup \SetCompX{\Tuple{\xi,m_1}}{\xi\inps[-10,-4)\cup(-1,6)}\cup\\
  &\SetCompX{\Tuple{\xi,p_1}}{\xi\inps(-6,1)\cup(4,10]}\cup\SetCompX{\Tuple{\xi,p_2}}{\xi\inps(-1,4)}
 \end{align*}
 are $1$-initial simulation relations from $\EpS{i}$ to $\Sigma_{S}^{\signalmap,1^\uparrow}$.
\end{example}

Since $l$-complete systems have the same external behavior as their strongest asynchronous $l$-complete approximations, i.e., ${\BehE{}=\BehlMaxE{}}$,
we could guess that the inverse relations of \eqref{equ:lem:ElmaxSpiSimulatesEs:R0} and 
\eqref{equ:lem:ElmaxSpiSimulatesEs:Rl} are $0$- and $l$-initial simulation relations from the abstraction to the original system, if $\EE{}$ is $l$-complete. However, for $\R_0$, observe
that the \enquote{recent past} of states $\xi$ reached at time $k<l$, i.e. $\xi\in\Xk{k<l}{}$, has length $k<l$ and is therefore, in general, not sufficient to uniquely determine the future behavior of an $l$-complete system.  
Furthermore, even for $\R_l$, $l$-completeness of $\EE{}$ is not sufficient, as the following example illustrates. \smalllb

\begin{example}\label{exp:BehX}\normalfont
For simplicity consider a discrete system $\E=\Tuple{\Nbn,\Gamma,\Beh_1},~\Gamma=\Set{a,b,c}$ realized by the FSM $\Psys_1$ depicted in Fig.~\ref{fig:exp:BehX:Abs} (left). Observe that $\E$ is asynchronously $1$-complete and its strongest asynchronous $1$-complete approximation $\Sigma^{1^\uparrow}$, realized by the FSM $\Psys_2$ depicted in Fig.~\ref{fig:exp:BehX:Abs} (right), has the same behavior. 
We can easily generate systems
 \begin{align*}
 \EpS{1}&\equps(
 \Nbn,\allowbreak\Nbn,\Gamma\timesps\Set{\xi_1,\xi_2,\xi_3},\Gamma,\Beh_f(\Psys_1),\Beh_1,\signalmap_l
 )\\
 \Sigma_{S}^{\signalmap,1^\uparrow}&\equps(\Nbn,\Nbn,\Gamma\timesps\Gamma,\Gamma,\Beh_f(\Psys_2),\Beh_1^{l^\uparrow},\signalmap_l)
 \end{align*}
from $\E$ and $\Sigma^{1^\uparrow}$, respectively, by using the trivial signal map $\signalmap_l$ \eqref{equ:signalmap_l}.
Now using \eqref{equ:lem:ElmaxSpiSimulatesEs:Rl} gives the relation
$\R_{l=1}^{-1}=\Set{\Tuple{a,\xi_1},\Tuple{a,\xi_3},\Tuple{b,\xi_2},\Tuple{c,\xi_2}}.$
It can be easily verified that $\R_{l=1}^{-1}$ is not a $1$-initial simulation relation from  $\Sigma_{S}^{\signalmap,1^\uparrow}$ to $\EpS{1}$ since $b$ and $c$ can occur in state $a$ of $\Psys_2$ while only $c$ can occur in state $\xi_3$ of $\Psys_1$.
\end{example}
\begin{figure}[htb]
\begin{center}
 \vspace{-0.2cm}
  \begin{tikzpicture}[auto,scale=1.4]
\node (dummy) at (0,0.7) {};
\node[state] (m_2) at (-1,0) {$\xi_1$};
\node[state] (m_1) at (0,0) {$\xi_2$};
\node[state] (p_1) at (1,0) {$\xi_3$};
\SFSAutomatEdge{dummy}{}{m_1}{}{}
\SFSAutomatEdge{m_1}{a}{m_2}{bend left}{}
\SFSAutomatEdge{m_2}{b}{m_1}{bend left}{}
\SFSAutomatEdge{m_1}{a}{p_1}{bend left}{}
\SFSAutomatEdge{p_1}{c}{m_1}{bend left}{}
%
\end{tikzpicture}
\hspace{2cm}
  \begin{tikzpicture}[auto,scale=1.4]
\node (dummy) at (-2.7,0) {};
\node[state] (q0) at (-2,0) {$\lambda$};
\node[state] (m_2) at (0,0) {$a$};
\node[state] (m_1) at (-1,0) {$c$};
\node[state] (p_1) at (1,0) {$b$};
\SFSAutomatEdge{dummy}{}{q0}{}{}
\SFSAutomatEdge{q0}{a}{m_2.north}{bend left}{pos=0.2}
\SFSAutomatEdge{m_2}{c}{m_1}{bend left}{}
\SFSAutomatEdge{m_1}{a}{m_2}{bend left}{}
\SFSAutomatEdge{m_2}{b}{p_1}{bend left}{}
\SFSAutomatEdge{p_1}{a}{m_2}{bend left}{}
%
\end{tikzpicture}
\end{center}
 \vspace{-0.2cm}
\caption{FSM $\Psys_1$ (left) and $\Psys_2$ (right) in Ex.\ref{exp:BehX}.}\label{fig:exp:BehX:Abs}
 \vspace{-0.2cm}
 \end{figure}

 We therefore have to additionally ensure that all states with identical \enquote{recent past} allow for the same future external behavior. Inspired by \cite[Thm.~4.18]{TabuadaBook} we formulate this property as an $l$-initial simulation relation $\R_{\mathcal{X}}$ from the original system to itself. The following lemma shows that this condition together with asynchronous $l$-completeness of the original system (i.e., ${\BehE{}=\BehlMaxE{}}$) is necessary and sufficient for $\R_l^{-1}$ to be a simulation relation from the  abstraction to the original system. \smalllb

\begin{lemma} \label{lem:EsLSimulatesElmaxSpi}
Given the premises of \REFlem{lem:ElmaxSpiSimulatesEs}
and 
 \begin{align}
 \R_{\mathcal{X}}&=\SetCompX{\Tuple{\xi_a,\xi_b}\inps\CARTPROD*{\X}{\X}}{
\ExQ{\zeta\in\Gamma^{l}}{
   \xi_a,\xi_b\hspace{-0.08cm}\in\hspace{-0.08cm}\Xx{}{\zeta}
}}\label{equ:Rx}
\end{align}
it holds that
\[\propAequ{
\begin{propConjA}
\BehE{}=\BehlMaxE{}\\
 \R_{\mathcal{X}}\in\SR{l}{\EpS{}}{\EpS{}}
\end{propConjA}
}{\R_l^{-1}\in\SR{l}{\EplaMaxS{}}{\EpS{}}}.\]
\end{lemma}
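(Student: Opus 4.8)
The plan is to prove the two implications separately, using throughout that on the abstraction side the signal map is the trivial $\philaMax$ (so its time scale is $\I$ and external time equals internal time there), and that $\EpS{}$ satisfies the state‑concatenation axiom \eqref{equ:def:SSDynSysInducesFiniteV}. In both directions the initialization clause \eqref{equ:def:InitSimRel} for $\R_l^{-1}$ is settled by the same observation: a reachable state of $\EplaMaxS{}$ at external time $l$ is a string $\zeta=\gamma\ll{0,l-1}$ with $\gamma\in\BehElaMax{}$, and since $\BehElaMax{}\ll{0,l-1}=\BehE{}\ll{0,l-1}$ (by the definition of $\BehElaMax{}$ in \REFlem{lem:constructElMax_general_TA}), $\zeta$ extends to some $\gamma^{*}\in\BehE{}$; any realizer of $\gamma^{*}$ in $\BehS{}$ reaches, at external time $l$, a state in $\Xxr{}{l}{\zeta}\subseteq\Xx{}{\zeta}$, which is $\R_l^{-1}$-related to $\zeta$.

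For the direction $\Leftarrow$, assume $\R_l^{-1}\in\SR{l}{\EplaMaxS{}}{\EpS{}}$. Since $\BehE{}\subseteq\BehlMaxE{}$ always, only $\BehlMaxE{}\subseteq\BehE{}$ needs proof: given $\gamma_1\in\BehlMaxE{}$, pick $\gamma'\in\BehE{}$ with $\gamma'\ll{0,l-1}=\gamma_1\ll{0,l-1}$ (possible as $\gamma_1\ll{0,l}\in\BehE{}\ll{0,l}$), realize it by $(w',x')\in\BehS{}$, $(\gamma',\tau')\in\signalmap(w')$, and take $t_2\in\timescaleDown{\tau'}(l)$; then $x'(t_2)\in\Xxr{}{l}{\gamma_1\ll{0,l-1}}\subseteq\Xx{}{z_1(l)}$ for the realizer $(\gamma_1,z_1)\in\BehlaMaxS{}$ of $\gamma_1$, so $(z_1(l),x'(t_2))\in\R_l^{-1}$; applying \eqref{equ:def:SimRel:b} with $t_1=k_1=k_2=l$ yields a trajectory of $\EpS{}$ with external signal $\CONCAT{\gamma'}{l}{l}{\gamma_1}=\gamma_1$, hence $\gamma_1\in\BehE{}$. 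For the second conjunct I would invoke \REFlem{lem:ElmaxSpiSimulatesEs}(ii), which gives $\R_l\in\SR{l}{\EpS{}}{\EplaMaxS{}}$, and the fact that $l$-initial simulation relations compose (transitivity of $\kg{l}$): the relational composite of $\R_l$ followed by $\R_l^{-1}$ is in $\SR{l}{\EpS{}}{\EpS{}}$ and equals $\{(\xi_a,\xi_b)\mid\exists\zeta\in\Gamma^{l}:\xi_a,\xi_b\in\Xx{}{\zeta}\}=\R_{\mathcal{X}}$ by \eqref{equ:Rx}.

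For the direction $\Rightarrow$, assume $\BehE{}=\BehlMaxE{}$ and $\R_{\mathcal{X}}\in\SR{l}{\EpS{}}{\EpS{}}$; after the initialization clause, the work is \eqref{equ:def:SimRel:b} for $\R_l^{-1}$. So let $(\gamma_1,z_1)\in\BehlaMaxS{}$ (hence $\tau_1=\I$, $k_1=t_1\ge l$), $(w',x')\in\BehS{}$, $(\gamma',\tau')\in\signalmap(w')$, $k_2=\tau'(t_2)$, with $(z_1(t_1),x'(t_2))\in\R_l^{-1}$; write $\zeta:=z_1(k_1)=\gamma_1\ll{k_1-l,k_1-1}$, so $x'(t_2)\in\Xx{}{\zeta}$. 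Since $\BehE{}=\BehlMaxE{}$, $\gamma_1\in\BehE{}$, so there is a realizer $(\bar w,\bar x)\in\BehS{}$, $(\gamma_1,\bar\tau)\in\signalmap(\bar w)$; with $\bar t\in\timescaleDown{\bar\tau}(k_1)$ one gets $\bar x(\bar t)\in\Xxr{}{k_1}{\zeta}$, hence $(\bar x(\bar t),x'(t_2))\in\R_{\mathcal{X}}$. Feeding this pair into \eqref{equ:def:SimRel:b} for $\R_{\mathcal{X}}$, with source $(\bar w,\bar x)$ at $\bar t$ and target prefix $(w',x')$ at $t_2$, yields $(w_2,x_2)\in\BehS{}$, $(\gamma_2,\tau_2)\in\signalmap(w_2)$ that agrees with $(w',x')$ before $t_2$, has $x_2(t_2)=x'(t_2)$, and external signal $\gamma_2=\CONCAT{\gamma'}{k_2}{k_1}{\gamma_1}$ — exactly the witness required by \eqref{equ:def:SimRel:b} for $\R_l^{-1}$.

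It remains to check, for each $k>k_2$ (so $t_1':=k-k_2+k_1>k_1\ge l$), that some $t_2'\in\timescaleDown{\tau_2}(k)$ with $t_2'>t_2$ satisfies $x_2(t_2')\in\Xx{}{z_1(t_1')}$ with $z_1(t_1')=\gamma_1\ll{t_1'-l,t_1'-1}$ of length $l$; such $t_2'$ exist since $\tau_2$ is surjective and monotone with $\tau_2(t_2)=k_2<k$. For $k\ge k_2+l$ this holds because the recent past of $x_2(t_2')$ along $(w_2,x_2)$ already equals $\gamma_1\ll{t_1'-l,t_1'-1}$; the difficulty — and I expect the main obstacle — is the transition window $k_2<k<k_2+l$, where that recent past is a mixture of $\gamma'$ and $\gamma_1$ and need not equal $z_1(t_1')$, while $\R_{\mathcal{X}}$-relatedness of $x_2(t_2')$ to the corresponding $\bar x$-state does not by itself transport the specific past $z_1(t_1')$. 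I would close this by a splice: from $x'(t_2)\in\Xx{}{\zeta}$ there is $(\hat w,\hat x)\in\BehS{}$, $(\hat\gamma,\hat\tau)\in\signalmap(\hat w)$ with $\hat x(\hat t)=x'(t_2)=x_2(t_2)$ at some external time $\hat k\ge l$ and $\hat\gamma\ll{\hat k-l,\hat k-1}=\zeta$; gluing $(\hat w,\hat x)$ (up to $\hat t$) to $(w_2,x_2)$ (from $t_2$) via \eqref{equ:def:SSDynSysInducesFiniteV} (using that the witness satisfies $\tau_2(t_2)=k_2$) yields a trajectory whose external signal coincides with $\gamma_1$, shifted so that $\hat k\leftrightarrow k_1$, on all external times $\ge\hat k-l$, since $\hat\gamma$ ends in $\zeta=\gamma_1\ll{k_1-l,k_1-1}$ and $\gamma_2$ continues as $\gamma_1$ past $k_2\leftrightarrow k_1$; along it $x_2(t_2')$ occurs at external time $k+\hat k-k_2\ge l$ with recent past exactly $\gamma_1\ll{t_1'-l,t_1'-1}=z_1(t_1')$, so $x_2(t_2')\in\Xxr{}{k+\hat k-k_2}{z_1(t_1')}\subseteq\Xx{}{z_1(t_1')}$, completing the verification. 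The $\Leftarrow$ direction is, by comparison, just the composition argument plus the one concrete realizability construction for $\BehE{}=\BehlMaxE{}$.
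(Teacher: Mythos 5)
Your \enquote{$\Rightarrow$} direction and your proof of $\BehE{}=\BehlMaxE{}$ in \enquote{$\Leftarrow$} are essentially the paper's own arguments: you realize $\gamma_1\in\BehE{}=\BehlMaxE{}$ in $\EpS{}$, pass the pair through $\R_{\mathcal{X}}$ to get the witness with $\gamma_2=\CONCAT{\gamma'}{k_2}{k_1}{\gamma_1}$ and the correct prefix, and you repair the future-matching clause --- where $\R_{\mathcal{X}}$-relatedness does not transport the specific past $z_1(\cdot)$ --- by splicing, via \eqref{equ:def:SSDynSysInducesFiniteV}, a trajectory reaching $x_2(t_2)=x'(t_2)$ with recent past $\zeta$ onto $(w_2,x_2)$ and then invoking \eqref{equ:SetOfReachableStates}; this is exactly the paper's construction of $(w'',x'')$ and $(\gamma'',\tau'')$, including your identification of the transition window as the only delicate step. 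Your behavioral-equality argument (prefix matching plus one application of \eqref{equ:def:SimRel:b} with $k_1=k_2=l$) is a slightly more explicit version of the paper's.

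Where you genuinely deviate is the second conjunct of \enquote{$\Leftarrow$}: you obtain $\R_{\mathcal{X}}\in\SR{l}{\EpS{}}{\EpS{}}$ as the relational composite of $\R_l\in\SR{l}{\EpS{}}{\EplaMaxS{}}$ (\REFlem{lem:ElmaxSpiSimulatesEs}~(ii)) with the hypothesis $\R_l^{-1}\in\SR{l}{\EplaMaxS{}}{\EpS{}}$, whereas the paper verifies \eqref{equ:def:InitSimRel} and \eqref{equ:def:SimRel:b} for $\R_{\mathcal{X}}$ directly, by applying the hypothesis to the abstraction trajectory $(\gamma_1,z_1)$ and using that $x_1(t)\in\Xx{}{z_1(k)}$ along any realizer of $\gamma_1$. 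The set identity $\R_l^{-1}\circ\R_l=\R_{\mathcal{X}}$ is correct (nonempty $\Xx{}{\zeta}$ forces $\zeta\in\Z$), and the composite even handles cleanly the case where the witness $\zeta$ for $\Tuple{x_1(t_1),x'(t_2)}\in\R_{\mathcal{X}}$ differs from $\gamma_1\ll{k_1-l,k_1-1}$, which the paper's terse write-up passes over. The caveat is that closure of $l$-initial simulation relations under relational composition is neither stated nor proved in this paper (only the preorder property of $\kg{l}$ is implicit via the cited companion work), and it is not entirely routine in this multi-time-scale setting: one must realize the intermediate state $\zeta$ by an abstraction trajectory, chain two instances of \eqref{equ:def:SimRel:b}, and check that the two concatenation identities compose, which works here precisely because the intermediate system carries the trivial signal map $\philaMax$ with time scale $\I$, so the external time at the gluing point is unambiguous. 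Spelling that out essentially reproduces the paper's direct argument; so either prove the composition step for this special case (a few lines) or cite it explicitly --- as written it is the one link in your chain not backed by anything available in this paper, while everything else matches the intended proof.
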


\begin{proof}
 See Appendix.
\end{proof}\vspace{0.1cm}

\begin{example}\label{exp:Rlt}\normalfont
 The inverse relations of $\R_{1,a}$ and $\R_{1,b}$ from Ex.\ref{exp:Rl} are a $1$-initial simulation relation from $\Sigma_{S}^{\signalmap,1^\uparrow}$ to $\EpS{i},~i\in\Set{a,b}$, respectively, since $\EpS{i}$ is $1$-complete and $\R_{\mathcal{X}}\in\SR{1}{\EpS{i}}{\EpS{i}}$ holds.
\end{example}

\begin{remark}
 Observe, that \eqref{equ:def:InitSimRel} and the last line of \eqref{equ:def:SimRel:b} holds for $\R_{\mathcal{X}}$ by construction, as shown in the proof of \REFlem{lem:EsLSimulatesElmaxSpi}. 
 Therefore, requiring $\R_{\mathcal{X}}\in\SR{l}{\EpS{}}{\EpS{}}$ only ensures, that all state trajectories starting with the same \enquote{recent past} are able to generate the same future external behavior by not changing their \enquote{full past}.
\end{remark}

\begin{remark}
\REFlem{lem:EsLSimulatesElmaxSpi} shows that the existence of an $l$-initial simulation relation from $\EplaMaxS{}$ to $\EpS{}$ implies external behavioral equivalence, i.e., ${\BehE{}=\BehlMaxE{}}$ (extending \REFlem{lem:LinitSimImpliesOtherSims}).
\end{remark}

 \begin{remark}\label{rem:WhyAsyncSS}
  It was shown in \cite{SchmuckRaisch2014_ControlLetters} that an asynchronous (in contrast to a synchronous) $l$-complete approximation can be represented by a finite state machine. This is necessary to apply well-known controller synthesis methods (e.g., SCT \cite{RamWon1984}). Since the focus of this paper is to construct a finite state abstraction for controller synthesis, we have restricted our attention to asynchronous $l$-complete approximations.\\
  It is easy to show, that the external behavior $\BehE{}$ of (externally) synchronous state space $\signalmap$-dynamical systems (see \cite[Def.2]{SchmuckRaisch2014_ControlLetters}) is asynchronously $l$-complete, i.e., $\BehE{}=\BehlMaxE{}$, if and only if the system is an asynchronous state space $\signalmap$-dynamical system. Therefore, we are only able to establish the results in \REFlem{lem:EsLSimulatesElmaxSpi} for the latter system class. 
\end{remark}

As our main result, we now show that a strongest asynchronous $l$-complete approximation simulates the original system in various ways and that the conditions in \REFlem{lem:EsLSimulatesElmaxSpi} imply bisimilarity of the original system and its approximation.\smalllb

\begin{theorem}\label{thm:EqRel_PISS}
Given the premises of \REFlem{lem:ElmaxSpiSimulatesEs}~-~\ref{lem:EsLSimulatesElmaxSpi} it holds that
\begin{enumerate}[\bfseries (i)]
 \item ${\EpS{}\kg{l=0}\EplaMaxS{}}$,
  \item ${\EpS{}\kg{\wsync}\EplaMaxS{}}$,
 \item ${\EpS{}\kg{\async}\EplaMaxS{}}$,\linebreak
 \item $\propImp{
  \begin{propConjA}
  \T=\Nbn\\
  \AllQ{w,\Tuple{\gamma,\tau}\in\signalmap(w)}{\tau=\I}
  \end{propConjA}
  }{\BR{\EpS{}\kg{\sync}\EplaMaxS{}}}$,
\item ${\EpS{}\kg{l}\EplaMaxS{}}$, and
\item $\propAequ{
\begin{propConjA}
{\BehE{}=\BehlMaxE{}}\\
 \R_{\mathcal{X}}\inps\SR{l}{\EpS{}}{\EpS{}}
\end{propConjA}
}{\propImp{
\begin{propConjA}
 \R_l\inps\SR{l}{\EpS{}}{\EplaMaxS{}}\\
 \R_l^{-1}\inps\SR{l}{\EplaMaxS{}}{\EpS{}}
\end{propConjA}
}{
\BR{\EpS{}\hg{l}\EplaMaxS{}}
}}.$\\
\end{enumerate}
\end{theorem}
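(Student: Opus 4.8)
The plan is to obtain the theorem by assembling what is already established in \REFlem{lem:ElmaxSpiSimulatesEs}, \REFlem{lem:EsLSimulatesElmaxSpi} and \REFlem{lem:LinitSimImpliesOtherSims}; at the level of the theorem only bookkeeping is left, the substance being in the appendix lemmas. Under the standing premises, \REFlem{lem:ElmaxSpiSimulatesEs}~(i)--(ii) already give that the relations of \eqref{equ:lem:ElmaxSpiSimulatesEs:R0}--\eqref{equ:lem:ElmaxSpiSimulatesEs:Rl} satisfy $\R_0\in\SR{0}{\EpS{}}{\EplaMaxS{}}$ and $\R_l\in\SR{l}{\EpS{}}{\EplaMaxS{}}$ unconditionally. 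For~(i), $\R_0\in\SR{0}{\EpS{}}{\EplaMaxS{}}$ is by \REFdef{def:SimilarBisimilar} exactly ${\EpS{}\kg{l=0}\EplaMaxS{}}$. Feeding this same $\R_0$ into \REFlem{lem:LinitSimImpliesOtherSims}~(i) and~(ii) yields $\R_0\in\SR{\wsync}{\EpS{}}{\EplaMaxS{}}$ and $\R_0\in\SR{\async}{\EpS{}}{\EplaMaxS{}}$, hence~(ii) and~(iii). For~(iv), $\EplaMaxS{}$ is built from the trivial signal map $\philaMax$ of \eqref{equ:signalmap_l}, whose only time scale is $\I$, and its internal time axis is $\Nbn$; so under the extra hypothesis $\T=\Nbn$ and $\tau=\I$ for every time scale occurring in $\signalmap$, all four premises of \REFlem{lem:LinitSimImpliesOtherSims}~(iii) hold for $\R_0$, giving $\R_0\in\SR{\sync}{\EpS{}}{\EplaMaxS{}}$ and thus ${\EpS{}\kg{\sync}\EplaMaxS{}}$. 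For~(v), $\R_l\in\SR{l}{\EpS{}}{\EplaMaxS{}}$ is precisely ${\EpS{}\kg{l}\EplaMaxS{}}$.

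For~(vi) I would argue as follows. Since $\R_l\in\SR{l}{\EpS{}}{\EplaMaxS{}}$ holds unconditionally, the condition that $\R_l\in\SR{l}{\EpS{}}{\EplaMaxS{}}$ and $\R_l^{-1}\in\SR{l}{\EplaMaxS{}}{\EpS{}}$ is equivalent to $\R_l^{-1}\in\SR{l}{\EplaMaxS{}}{\EpS{}}$ alone, which by \REFlem{lem:EsLSimulatesElmaxSpi} is equivalent to the pair ${\BehE{}=\BehlMaxE{}}$, $\R_{\mathcal{X}}\in\SR{l}{\EpS{}}{\EpS{}}$; hence the left-hand side of the claimed equivalence and the antecedent of its right-hand side coincide. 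It then remains to note that when both $\R_l\in\SR{l}{\EpS{}}{\EplaMaxS{}}$ and $\R_l^{-1}\in\SR{l}{\EplaMaxS{}}{\EpS{}}$ hold, the single relation $\R\deff\R_l$ is such that $\R\in\SR{l}{\EpS{}}{\EplaMaxS{}}$ and $\R^{-1}\in\SR{l}{\EplaMaxS{}}{\EpS{}}$, so $\R$ witnesses ${\EpS{}\hg{l}\EplaMaxS{}}$ in the sense of \REFdef{def:SimilarBisimilar}. Chaining these observations gives exactly the asserted statement.

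I do not expect a genuine obstacle here; the real work is entirely inside \REFlem{lem:ElmaxSpiSimulatesEs} and \REFlem{lem:EsLSimulatesElmaxSpi}. The only points needing a moment of care are the observation in~(iv) that the identity-time-scale requirement of \REFlem{lem:LinitSimImpliesOtherSims}~(iii) is automatically met on the $\EplaMaxS{}$-side, so it need only be imposed on $\EpS{}$, and, in~(vi), unwinding the nested implication carefully, so that the vacuous first conjunct is dropped before \REFlem{lem:EsLSimulatesElmaxSpi} is applied and the passage to $l$-initial bisimilarity, which is immediate from \REFdef{def:SimilarBisimilar}, is made explicit.
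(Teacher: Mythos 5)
Your proposal is correct and follows essentially the same route as the paper: parts (i)--(v) are obtained from \REFlem{lem:ElmaxSpiSimulatesEs} together with \REFlem{lem:LinitSimImpliesOtherSims}~(i)--(iii) and \REFdef{def:SimilarBisimilar}, and part (vi) from \REFlem{lem:ElmaxSpiSimulatesEs}~(ii), \REFlem{lem:EsLSimulatesElmaxSpi} and \REFdef{def:SimilarBisimilar}, with $\R_l$ as the bisimulation witness. Your added observations (that the identity-time-scale premises of \REFlem{lem:LinitSimImpliesOtherSims}~(iii) hold automatically on the $\EplaMaxS{}$ side, and that the unconditional $\R_l\in\SR{l}{\EpS{}}{\EplaMaxS{}}$ reduces the conjunction in (vi) to $\R_l^{-1}\in\SR{l}{\EplaMaxS{}}{\EpS{}}$) are consistent with, and slightly more explicit than, the paper's one-line justifications.
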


\begin{proof}\label{proof:thm:EqRel_PISS}
 \begin{inparaenum}[\bfseries (i)]
  \item From \REFlem{lem:ElmaxSpiSimulatesEs} (i) and \REFdef{def:SimilarBisimilar}.
  \item From (i), \REFlem{lem:LinitSimImpliesOtherSims} (i) and \REFdef{def:SimilarBisimilar}.
  \item From (i), \REFlem{lem:LinitSimImpliesOtherSims} (ii) and \REFdef{def:SimilarBisimilar}.
  \item From (i), \REFlem{lem:LinitSimImpliesOtherSims} (iii) and \REFdef{def:SimilarBisimilar}. 
  \item From \REFlem{lem:ElmaxSpiSimulatesEs} (ii) and \REFdef{def:SimilarBisimilar}.
  \item From \REFlem{lem:ElmaxSpiSimulatesEs} (ii), \REFlem{lem:EsLSimulatesElmaxSpi} and \REFdef{def:SimilarBisimilar}
 \end{inparaenum}.
\end{proof}\vspace{0.1cm}

\begin{remark}\label{rem:ConnectionTabuada}
 There is a strong connection between \REFthm{thm:EqRel_PISS} (iv) 
 and the work of Tabuada \cite[Thm. 4.18]{TabuadaBook}. 
  It is possible to show that the construction of 
 the quotient system in \cite[Def.4.17]{TabuadaBook} coincides with a realization of the strongest asynchronous one-complete 
 approximation if a time-shifted version of the state space construction is used in \REFdef{lem:CorrPastIndSS} (analogously to \cite[Sec. 4]{Raisch2010}).
  This time-shift also \enquote{shifts} the definition of the sets $\Xx{}{\cdot}$ and therefore, all relations used in \REFlem{lem:ElmaxSpiSimulatesEs} and \ref{lem:EsLSimulatesElmaxSpi}. 
Using these time-shifted definitions, it can be shown, that for transition systems the results in \cite[Thm. 4.18]{TabuadaBook} and \REFthm{thm:EqRel_PISS} (iv) are equivalent.
\end{remark}

\section{Conclusion}\label{sec:conclusion}
We have shown in this paper, that the concepts of asynchronous state space $\signalmap$-dynamical systems and strongest asynchronous $l$-complete approximations can be combined to generate a finite state abstraction realizable by a finite state machine and therefore suitable for controller synthesis using supervisory control theory. Using simulation relations developed for $\signalmap$-dynamical systems, 
we have proven that a strongest asynchronous $l$-complete approximation simulates the original system in various ways. 
 In particular, necessary and sufficient conditions for the existence of a bisimulation relation where derived. It was discussed in Remark~\ref{rem:ConnectionTabuada}, that these conditions can be interpreted as a generalization of the results by Tabuada \cite[Thm. 4.18]{TabuadaBook} to abstractions with $l>1$ and systems which are not realizable by transition systems. We are currently preparing a paper where this connection is formally proven.

\begin{appendix}
 
\textit{Proof of \REFlem{lem:ElmaxSpiSimulatesEs} (i):}\\
\textbf{1)} \textbf{Show \eqref{equ:def:InitSimRel} holds} for ${l=0}$: 
\begin{compactitem}
\item Observe that $\Zk{0}{}=\Zt{0}{}=\Set{\lambda}$ (from \REFdef{def:TimeIndexedStateSpace1} (with state space $Z$) 
and the construction of $Z$ in \REFlem{lem:CorrPastIndSS}) and  $\Xxr{}{0}{\lambda}=\Xk{0}{}$ (from \eqref{equ:SetOfReachableStates} 
and \REFdef{def:TimeIndexedStateSpace1}).
\item With the construction of $\R_0$ in
\eqref{equ:lem:ElmaxSpiSimulatesEs:R0}, this implies $\AllQ{\xi\in\Xk{0}{}}{\Tuple{\xi,\lambda}\in\R_0}$, 
i.e., \eqref{equ:def:InitSimRel} holds for ${l=0}$.\smalllb
\end{compactitem}
\textbf{2)} \textbf{Show \eqref{equ:def:SimRel:b} holds}: 
\begin{compactitem}
 \item Using the trivial signal map $\philaMax$ we fix $\Tuple{w_1,x_1}\in\BehS{},\Tuple{\gamma',z'}\in\BehlaMaxS{}, \Tuple{\gamma_1,\tau_1}\in\signalmap(w_1), t_1\in\T, k_1,k_2\in\Nbn$ s.t. the right side of the implication in \eqref{equ:def:InitSimRel} holds, i.e.,  $k_1=\timescaleUp{\tau_1}(t_1)$ and $\Tuple{x_1(t_1),z'(k_2)}\in\R_0$.
 \item Now we construct particular $z_2$ and $\gamma_2$ to show that the right side of the implication in \eqref{equ:def:SimRel:b} holds:\\
\begin{inparaitem}[$\blacktriangleright$\hspace{-0.15cm}]
 \item With the construction of $\R_0$ in \eqref{equ:lem:ElmaxSpiSimulatesEs:R0} we have
 \[x_1(t_1)\in\DiCases
{\Xxr{}{\lengthw{z'(k_2)}}{z'(k_2)}}{\lengthw{z'(k_2)}<l}
{\Xx{}{z'(k_2)}}{\lengthw{z'(k_2)}=l}\]
and therefore we can fix (from \eqref{equ:SetOfReachableStates})
 $\Tuple{\wt,\xt}\in\BehS{},\Tuple{\gt,\taut}\in\signalmap(\wt),\kt\in\Nbn,\tet\in\timescaleDown{\tau}(\kt)$
 s.t. $\x_1(t_1)=\xt(\tet)$ and 
 $z'(k_2)=\gt\ll{\max(0,\kt-\lengthw{z'(k_2)}),\kt-1}$. 
 Furthermore, if $\lengthw{z'(k_2)}<l$ we have $\kt=\lengthw{z'(k_2)}$.\\
\item Since $\EpS{}$ is an asynchronous state space dynamical system, $\x_1(t_1)=\xt(\tet)$ implies that we can pick $w_1'=\CONCAT{\wt}{\tet}{t_1}{w_1}$,$x_1'=\CONCAT{\xt}{\tet}{t_1}{x_1}$,$\tau_1'=\CONCAT{\taut}{\tet}{t_1}{(\tau_1+\tilde{k}-k_1)}$ and $\gamma_1'=\CONCAT{\gt}{\kt}{k_1}{\gamma_1}$ and have $(w_1',x_1')\in\BehS{}$ and $\Tuple{\gamma_1',\tau_1'}\in\signalmap(\w_1')$ implying $\gamma_1'\in\BehE{}$. \\
\item Since $\ElMaxE{}$ is the strongest async. $l$-complete approx. of $\EE{}$ we have $\BehE{}\subseteq\BehElaMax{}$ and therefore $\gamma_1'\in\BehElaMax{}$.\\
\item Now pick $z_1'\in\Z^{\Nbn}$ s.t. $\AllQ{k\in\Nbn}{z_1'(k)=\gamma_1'\ll{\max(0,k-l),k-1}}$ and observe that the construction of $\gamma_1'$ implies $z_1'(\kt)=z'(k_2)$ and since $\gamma_1'\in\BehElaMax{}$ it follows from \REFlem{lem:CorrPastIndSS} by construction that $\Tuple{\z_1',\gamma_1'}\in\BehlaMaxS{}$.\\
\item Now pick $z_2:=\CONCAT{z'}{k_2}{\kt}{z_1'}$ and $\gamma_2:=\CONCAT{\gamma'}{k_2}{\kt}{\gamma_1'}$. Since $\ElaMaxS{}=\Tuple{\Nbn,\Gamma,\Z,\BehlaMaxS{}}$ is an asynchronous state space system, $\Tuple{\z_1,\gamma_1},\Tuple{\z_1',\gamma_1'}\in\BehlaMaxS{}$ and $z_1'(\kt)=z'(k_2)$ from above, \eqref{equ:StateSpaceDynamicalSystem:2} implies $\Tuple{\z_2,\gamma_2}\in\BehlaMaxS{}$.\smalllb
\end{inparaitem}
\item Finally we show, that for this choice of $\z_2$ and $\gamma_2$ the right side of the implication in \eqref{equ:def:SimRel:b} holds: \\
\begin{inparaitem}[$\blacktriangleright$\hspace{-0.15cm}]
\item Show $\gamma_2=\CONCAT{\gamma'}{k_2}{k_1}{\gamma_1}$: Observe that
\[\gamma_2=\CONCAT{\gamma'}{k_2}{\kt}{\gamma_1'}=\CONCAT{\gamma'}{k_2}{\kt}{\BR{\CONCAT{\gt}{\kt}{k_1}{\gamma_1}}}=\CONCAT{\gamma'}{k_2}{k_1}{\gamma_1}.\]
\item Show $\AllQ{k< k_2}{\propConj{\gamma_2(k)=\gamma'(k)}{\z_2(k)=z'(k)}}$: Follows from \eqref{equ:concat} and the construction of $\z_2$ and $\gamma_2$.\\
\item Show $z_2(k_2)=z'(k_2)$: From \eqref{equ:concat} we have $z_2(k_2)=z_1'(\kt)$ and from above $z_1'(\kt)=z'(k_2)$.\\
\item Show:
\[\AllQ{k\geq k_2,t_1'\inps\timescaleDown{\tau_1}(k-k_2+k_1),t_1'>t_1}{
 \Tuple{x_1(t_1'),z_2(k)}\in\R_0}\] 
\noindent With $\hat{k}:=k-k_2+\kt$ and $z_2=\CONCAT{z'}{k_2}{\kt}{z_1'}$ we get 
 \[\AllQ{\hat{k}\geq \kt,t_1'\in\timescaleDown{\tau_1}(\hat{k}-\kt+k_1),t_1'>t_1}{
 \Tuple{x_1(t_1'),z_1'(\hat{k})}\in\R_0}.\]
\noindent  Using $\tau_1'^{-1}=\CONCAT{\taut^{-1}}{\kt}{k_1}{(\tau_1^{-1}+\tilde{t}-t_1)}$, $x_1'=\CONCAT{\xt}{\tet}{t_1}{x_1}$ and $\hat{t}:=t_1'+\tet-t_1$ gives
 \[\AllQ{\hat{k}\geq \kt,\hat{t}\in\timescaleDown{\tau_1'}(\hat{k}),\hat{t}>\tet}{ \Tuple{x_1'(\hat{t}),z_1'(\hat{k})}\in\R_0}.\] 
 With the construction of $\R_0$ and $z_1'$ the statement to be proven is therefore true iff the statement
  \[\AllQ{\hat{k}\geq \kt,\hat{t}\in\timescaleDown{\tau_1'}(\hat{k}),\hat{t}>\tet}{ x_1'(\hat{t})\in
 \DiCases{\Xxr{}{\hat{k}}{\gamma_1'\ll{0,\hat{k}-1}}}{\hat{k}<l}{\Xx{}{\gamma_1'\ll{\hat{k}-l,\hat{k}-1}}}{\hat{k}\geq l}}\]
holds, what is true from \REFdef{def:TimeIndexedStateSpace1}, since $(w_1',x_1')\in\BehS{}$ and $\Tuple{\gamma_1',\tau_1'}\in\signalmap(\w_1')$ from above
\end{inparaitem}.\hspace*{0pt}\hfill\begin{small}$\blacksquare$\end{small}
\end{compactitem}

\vspace{0.1cm}

\textit{Proof of \REFlem{lem:ElmaxSpiSimulatesEs} (ii):}\\
\textbf{1)}
\textbf{Show \eqref{equ:def:InitSimRel} holds for $l$:}
\begin{compactitem}
 \item Fix any $\Tuple{w,x}\in\BehS{},\Tuple{\gamma,\tau}\in\signalmap(w),t\in\timescaleDown{\tau}(l)$. 
 \item Observe that $\gamma\in\BehE{}$ (from \REFdef{def:DynSysInducesFiniteV}), $\gamma\in\BehElaMax{}$ (from \REFlem{lem:constructElMax_general_TA}) and $x(t)\in\Xk{l}{}$ (from \REFdef{def:TimeIndexedStateSpace1}). 
 \item This implies that we can pick $\zeta=\gamma\ll{0,l-1}$ and have $\zeta\in\Zk{l}{}$ (from \REFlem{lem:CorrPastIndSS}) and $x(t)\in\Xx{}{\zeta}$ (from \eqref{equ:SetOfReachableStates}), 
i.e., \eqref{equ:def:InitSimRel} holds. 
\end{compactitem}
\textbf{2)} \textbf{Show \eqref{equ:def:SimRel:b} holds}: This proof is identical to (i.2) only considering states $\zeta$ of length $l$, and is therefore omitted.%
\hspace*{10pt}\hfill\begin{small}$\blacksquare$\end{small}
 \vspace{0.08cm}

\textit{Proof of \REFlem{lem:EsLSimulatesElmaxSpi} \enquote{$\Rightarrow$}:}\\
\textbf{1.) Show \eqref{equ:def:InitSimRel} holds for $\R_l^{-1}$:}
 \begin{compactitem}
 \item Pick $\zeta\in\Zk{l}{}$ and observe, that from \REFlem{lem:CorrPastIndSS} and $\BehE{}=\BehlMaxE{}$ follows that there exits $\Tuple{w,x}\in\BehS{},\Tuple{\gamma,\tau}\in\signalmap(w),t\in\timescaleDown{\tau}(l)$ s.t. $\gamma\ll{0,l-1}=\zeta$ implying $x(t)\in\Xk{l}{}$ and $x(t)\in\Xx{}{\zeta}$, i.e., \eqref{equ:def:InitSimRel} holds.
 \end{compactitem}
\textbf{2.) Show \eqref{equ:def:SimRel:b} holds for $\R_l^{-1}$}: 
\begin{compactitem}
 \item Using the trivial signal map $\philaMax$ we fix $\Tuple{\gamma_1,z_1}\in\BehlaMaxS{},\Tuple{w',x'}\in\BehS{}, \Tuple{\gamma',\tau'}\in\signalmap(w'), t_2\in\T, k_1,k_2\in\Nbn$ s.t. the left side of the implication in \eqref{equ:def:SimRel:b} holds, i.e., $k_2=\timescaleUp{\tau_2}(t_2)$ and $\Tuple{z_1(k_1),x'(t_2)}\in\R_l^{-1}$.
\item Now we construct particular $x_2,w_2,\tau_2$ and $\gamma_2$ to show that the right side of the implication in \eqref{equ:def:SimRel:b} holds:\\
\begin{inparaitem}[$\blacktriangleright$\hspace{-0.15cm}]
 \item Since $\Tuple{\gamma_1,z_1}\in\BehlaMaxS{}$ we have $\gamma_1\in\BehElaMax{}$ from \REFlem{lem:CorrPastIndSS} and as $\BehE{}$ is $l$-complete, we have $\BehE{}=\BehElaMax{}$ and therefore $\gamma_1\in\BehE{}$. Using \eqref{equ:BehE}, we can therefore fix $\Tuple{w_1,x_1}\in\BehS{},\tau_1\in\timescale$ s.t. $\Tuple{\gamma_1,\tau_1}\in\signalmap(w_1)$. \\
 \item Remember $\Tuple{\gamma_1,z_1}\in\BehlaMaxS{}$ and $\Tuple{z_1(k_1),x'(t_2)}\in\R_l^{-1}$. Using \eqref{equ:SetOfReachableStates}, we can pick $t_1\in\timescaleDown{\tau_1}(k_1)$ and have $x_1(t_1)\in\Xx{}{z_1(k_1)}$. Using \eqref{equ:Rx} this implies $\Tuple{x_1(t_1),x'(t_2)\in\R_{\mathcal{X}}}$.\\
\item Since $\R_{\mathcal{X}}\in\SR{l}{\EpS{}}{\EpS{}}$ we know that we can pick $\Tuple{w_2,x_2}\in\BehS{},\Tuple{\gamma_2,\tau_2}\in\signalmap(w_2)$ s.t. 
\begin{equation}\label{equ:proof:1}
  \begin{propConjA}
\gamma_2=\CONCAT{\gamma'}{k_2}{k_1}{\gamma_1}\\
\AllQ{t\in\T_2, t< t_2}{
\begin{propConjA}
w_2(t)=w'(t)\\
x_2(t)=x'(t)\\
\tau_2(t)=\tau'(t)\\
\end{propConjA}
}\\
x_2(t_2)=x'(t_2)\\
\AllQSplit{k\geq k_2,t_1'\in\timescaleDown{\tau_1}(k-k_2+k_1),t_1'>t_1}{\ExQ{t_2'\in\timescaleDown{\tau_2}(k),t_2'>t_2 }{
\Tuple{x_1(t_1'),x_2(t_2')}\in\R_{\mathcal{X}}}}
\end{propConjA}
.
\end{equation}
\end{inparaitem}
\item Observe, that with this choice of $w_2,x_2,\gamma_2,\tau_2$ the first tree lines of the right side of \eqref{equ:def:SimRel:b} for $\R_l^{-1}$ are equivalent to the first tree lines in \eqref{equ:proof:1}.
\item Now we show that the last line of \eqref{equ:def:SimRel:b} holds for $\R_l^{-1}$.\\
\begin{inparaitem}[$\blacktriangleright$\hspace{-0.15cm}]
\item Observe ${x_2(t_2)=x'(t_2)\in\Xx{}{z(k_1)}}$. Using \eqref{equ:SetOfReachableStates} we can therefore fix $\Tuple{\wt,\xt}\in\BehS{},\Tuple{\gt,\taut}\in\signalmap(\wt),\kt\in\Nbn,\tet\in\timescaleDown{\tau}(\kt)$
 s.t. $\x_2(t_2)=\xt(\tet)$ and
 $z(k_1)=\gt\ll{\kt-l,\kt-1}$, implying $\gt\ll{\kt-l,\kt-1}=\gamma_1\ll{k_1-l,k_1-1}$. \\
\item Since $\EpS{}$ is an asynchronous state space dynamical system, we can pick $w''=\CONCAT{\wt}{\tet}{t_2}{w_2}$,$x''=\CONCAT{\xt}{\tet}{t_2}{x_2}$,$\tau''=\CONCAT{\taut}{\tet}{t_2}{(\tau_2+\tilde{k}-k_2)}$ and $\gamma''=\CONCAT{\gt}{\kt}{k_2}{\gamma_2}$ and have $(w'',x'')\in\BehS{}$ and $\Tuple{\gamma'',\tau''}\in\signalmap(\w'')$.\\
\item Now using $\gamma_2=\CONCAT{\gamma'}{k_2}{k_1}{\gamma_1}$ and $\gt\ll{\kt-l,\kt-1}=\gamma_1\ll{k_1-l,k_1-1}$ from above gives 
$\gamma''=\CONCAT{\gt}{\kt}{k_2}{\gamma_2}=
\CONCAT{\gt}{\kt}{k_2}{\BR{\CONCAT{\gamma'}{k_2}{k_1}{\gamma_1}}}
=\CONCAT{\gt}{\kt}{k_1}{\gamma_1}
=\CONCAT{\gt}{\kt-l}{k_1-l}{\gamma_1}
$
implying 
\begin{equation}\label{equ:proof:lem:EsLSimulatesElmaxSpi:1}
 \AllQ{\hat{k}\geq \kt}{\gamma''\ll{\hat{k}-l,\hat{k}-1}=\gamma_1\ll{\hat{k}-\kt+k_1-l,\hat{k}-\kt+k_1-1}}.
\end{equation}
\item Remember that we have to show
 \[\AllQ{k\geq k_2}{\ExQ{t_2'\in\timescaleDown{\tau_2}(k),t_2'>t_2 }{
\Tuple{z_1(k-k_2+k_1),x_2(t_2')}\in\R_l^{-1}.}}\]
Using $\hat{k}:=k-k_2+\kt$, $\R_l^{-1}$ from \eqref{equ:lem:ElmaxSpiSimulatesEs:Rl} and  $(\gamma_1,z_1)\in\BehlaMaxS{}$ this is equivalent to
\[\AllQ{\hat{k}\geq \kt}{\ExQ{t_2'\in\timescaleDown{\tau_2}(\hat{k}-\kt+k_2),t_2'>t_2 }{
x_2(t_2')\in\Xx{}{\gamma_1\ll{\hat{k}-\kt+k_1-l,\hat{k}-\kt+k_1-1}},}}\]
and using $\tau''^{-1}=\CONCAT{\taut^{-1}}{\kt}{k_2}{(\tau_2^{-1}+\tilde{t}-t_2)}$, $x''=\CONCAT{\xt}{\tet}{t_2}{x_2}$, \eqref{equ:proof:lem:EsLSimulatesElmaxSpi:1} and $\hat{t}:=t_2'+\tet-t_2$ it is equivalent to
 \[\AllQ{\hat{k}\geq \kt}{\ExQ{\hat{t}\in\timescaleDown{\tau''}(\hat{k}),\hat{t}>\tet }{
x''(\hat{t})\in\Xx{}{\gamma''\ll{\hat{k}-l,\hat{k}-1}}}}.\]
Now observe that the last statement is true from \eqref{equ:SetOfReachableStates}, since $(w'',x'')\in\BehS{}$ and $\Tuple{\gamma'',\tau''}\in\signalmap(\w'')$ from above, what proves the statement
\end{inparaitem}.\hspace*{0pt}\hfill\begin{small}$\blacksquare$\end{small}
\end{compactitem}

\vspace{0.1cm}
%
\textit{Proof of \REFlem{lem:EsLSimulatesElmaxSpi} \enquote{$\Leftarrow$}:}\\
\textbf{1.) Show ${\BehE{}=\BehlMaxE{}}$:} 
\begin{compactitem}
\item Observe that $\BehE{}\subseteq\BehlMaxE{}$ by definition and $\R_l^{-1}\in\SR{l}{\EplaMaxS{}}{\EpS{}}$ implies $\BehE{}\llr{l,\infty}=\BehlMaxE{}\llr{l,\infty}$.
\item Show $\BehlMaxE{}\ll{0,l-1}\subseteq\BehE{}\ll{0,l-1}$: \\
\begin{inparaitem}[$\blacktriangleright$\hspace{-0.15cm}]
 \item Observe that \REFlem{lem:CorrPastIndSS} implies $\BehlMaxE{}\ll{0,l-1}=\Zk{l}{}$. \\
 \item Now \eqref{equ:lem:ElmaxSpiSimulatesEs:Rl} and $\R_l^{-1}\in\SR{l}{\EplaMaxS{}}{\EpS{}}$ implies $\AllQ{\zeta\in\Zk{l}{}}{\ExQ{\xi\in\Xk{l}{}}{\xi\in\Xx{}{\zeta}}}$.\\ 
 \item With \REFdef{def:TimeIndexedStateSpace1}, this implies $\AllQ{\zeta\in\Zk{l}{}}{\ExQ{\gamma\in\BehE{}}{\gamma\ll{0,l-1}=\zeta}}$ what proves the statement.
\end{inparaitem}
\end{compactitem}
\textbf{2.) Show \eqref{equ:def:InitSimRel} holds for $\R_{\mathcal{X}}$:} by construction.\\ 
\textbf{3.) Show \eqref{equ:def:SimRel:b} holds for $\R_{\mathcal{X}}$}: 
\begin{compactitem}
 \item Pick the signals $\cdot_1$ and $\cdot'$ as in the first part, i.e. \enquote{$\Rightarrow$}, of the proof and observe, that $\Tuple{x_1(t_1),x'(t_2)}\in\R_{\mathcal{X}}$ and $\Tuple{z_1(k_1),x'(t_2)}\in\R_l^{-1}$ as before.
 \item Since $\R_l^{-1}\in\SR{l}{\EplaMaxS{}}{\EpS{}}$ we can use \eqref{equ:def:SimRel:b} and pick signals $\Tuple{w_2,x_2}\in\BehS{},\Tuple{\gamma_2,\tau_2}\in\signalmap(w_2)$ s.t.
 \begin{equation}\label{equ:proof:2}
  \begin{propConjA}
\gamma_2=\CONCAT{\gamma'}{k_2}{k_1}{\gamma_1}\\
\AllQ{t\in\T_2, t< t_2}{
\begin{propConjA}
w_2(t)=w'(t)\\
x_2(t)=x'(t)\\
\tau_2(t)=\tau'(t)\\
\end{propConjA}
}\\
x_2(t_2)=x'(t_2)\\
\AllQ{k\geq k_2}{\ExQSplit{t_2'\in\timescaleDown{\tau_2}(k),t_2'>t_2 }{
\Tuple{z_1(k-k_2+k_1),x_2(t_2')}\in\R_l^{-1}}}
\end{propConjA}
.
\end{equation}
\item Again, the first tree lines of \eqref{equ:proof:2} are equivalent to \eqref{equ:proof:1}.
\item To proof, that the last line of  \eqref{equ:proof:1} also holds for this choice of signals $\cdot_2$,
remember that by construction $\AllQ{k\in\Nbn,t\in\timescaleDown{\tau_1}(k)}{x_1(t)\in\Xx{}{z_1(k)}}$. With the parametrization of $k$ in the last line of \eqref{equ:proof:1} and the last line of \eqref{equ:proof:2} we therefore have 
\[\AllQSplit{k\geq k_2,t_1'\in\timescaleDown{\tau_1}(k-k_2+k_1),t_1'>t_1}{\ExQ{t_2'\in\timescaleDown{\tau_2}(k),t_2'>t_2 }{
\Tuple{x_1(t_1'),x_2(t_2')}\in\Xx{}{z_1(k-k_2+k_1)}}}\]
what proves the statement (from \eqref{equ:Rx}).\hspace*{0pt}\hfill\begin{small}$\blacksquare$\end{small}%
\end{compactitem}%

\end{appendix}

\end{document}